\renewcommand\@makefnmark{\hbox{\@textsuperscript{\normalfont\color{black}\@thefnmark}}}
\newtheorem{prop}{Proposition}
\newtheorem{Definition}{Definition}
\newtheorem{Remark}{Remark}
\definecolor{turquoise}{rgb}{.0,.3,1.0}
\definecolor{Gray}{gray}{0.93}
\begin{document}
\title{Inertial-Based LQG Control: A New Look at Inverted Pendulum Stabilization}
\author{Daniel~Engelsman,~\IEEEmembership{Graduate Student Member,~IEEE} and~Itzik~Klein,~\IEEEmembership{Senior Member,~IEEE}
\thanks{The authors are with the Hatter Department of Marine Technologies, Charney School of Marine Sciences, University of Haifa, Israel. \{dengelsm@campus, kitzik@univ\}.haifa.ac.il}}
\maketitle

\begin{abstract}
Linear quadratic Gaussian (LQG) control is a well-established method for optimal control through state estimation, particularly in stabilizing an inverted pendulum on a cart. In standard laboratory setups, sensor redundancy enables direct measurement of configuration variables using displacement sensors and rotary encoders. However, in outdoor environments, dynamically stable mobile platforms—such as Segways, hoverboards, and bipedal robots—often have limited sensor availability, restricting state estimation primarily to attitude stabilization. Since the tilt angle cannot be directly measured, it is typically estimated through sensor fusion, increasing reliance on inertial sensors and necessitating a lightweight, self-contained perception module. Prior research has not incorporated accelerometer data into the LQG framework for stabilizing pendulum-like systems, as jerk states are not explicitly modeled in the Newton-Euler formalism. In this paper, we address this gap by leveraging local differential flatness to incorporate higher-order dynamics into the system model. This refinement enhances state estimation, enabling a more robust LQG controller that predicts accelerations for dynamically stable mobile platforms.
\end{abstract}
%overlooking the observability challenges posed by the differential coupling with the control model. 
% 
\begin{IEEEkeywords}
Linear quadratic Gaussian; optimal control; state estimation; differential flatness; inverted pendulum; 
\end{IEEEkeywords}
% 
% ==================================================================== %                                  INTRODUCTION
% ==================================================================== %

\section{Introduction}
\IEEEPARstart{S}oon after Rudolf E. Kálmán introduced the linear quadratic regulator (LQR) in 1960 \cite{kalman1960contributions}, the control community began exploring its application to dynamic systems. Of the many systems examined, the inverted pendulum on a cart (IPoC), commonly credited to Richard Bellman \cite{bertsekas2012dynamic}, stood out as particularly compelling. 
% \\
In this fundamental problem, a rigid pendulum is hinged at its base, allowing it to rotate with one degree of freedom (DoF), while a mobile cart, capable of horizontal movement, provides the translational DoF. By applying appropriate forces to the cart, the resulting accelerations counteract the pendulum’s deviations from the upright position, effectively stabilizing it and restoring equilibrium.
\\
This challenge spans several key disciplines within dynamical systems theory: i) Dynamics, where the motion of the pendulum and cart is governed by coupled nonlinear relationships but offers equilibrium points with local linearity; ii) Kinematics, which describes the system's motion over time, subject to positional setpoints and physical limitations; iii) Control, which requires a stabilizing strategy to counteract the pendulum's instability and prevent large deviations; and iv) Estimation, which focuses on determining the system's internal states, especially when full observability is lacking or when not all states can be directly measured \cite{prasad2014optimal}.
\\
As an unstable single-input, multiple-output (SIMO) system, its stabilization depends critically on the accuracy of the system model, the validity of the linearization around equilibrium points, the rapid minimization of both control and estimation errors, and, most importantly, robustness to noise, disturbances, and unmodeled uncertainties.
% \\
Over the years, various control strategies have been proposed, ranging from traditional proportional–integral–derivative (PID) and cascaded controllers \cite{nasir2008performance}, to more advanced techniques such as dynamic surface control \cite{huang2015nonlinear}, neural networks \cite{anderson2002learning}, information-aiding \cite{engelsman2023information}, and, more recently, reinforcement learning \cite{ozalp2020review}.
\\
Despite the importance of these approaches as control benchmarks, the LQG framework has remained dominant due to its analytical transparency, which ensures optimality when the system model is accurate and noise is assumed to follow a Gaussian distribution \cite{chacko2023lqr}. In practice, the true system states and their estimates are integrated within a state-space framework, facilitating the analysis of the closed-loop eigenvalues of the combined system, which includes both the Kalman filter (KF) and the controller's gain \cite{brunton2022data}.
\\
Since dynamically stable platforms are central to our study, the perception module depends on inertial sensor measurements, a mobile and standalone solution. However, this reliance also introduces instrumental noise that may compromise performance. To mitigate this issue, several approaches have been proposed, including bandpass filtering \cite{park2008error, alam2014adaptive}, learning-based frameworks \cite{li2011neural, engelsman2023data, cohen2024inertial}, multiple sensors \cite{waegli2010noise, libero2024augmented, engelsman2024parametric}, and, more recently, hybrid approaches \cite{chen2024slip, cohen2025adaptive}.
\\
Although much attention has been devoted to this problem, the challenge of integrating inertial sensor data directly into the LQG framework has remained unresolved outside controlled laboratory settings. To address this gap, our work makes the following contributions:
\begin{enumerate}[label=(\roman*)]
    \item \textbf{Inertial-aided LQG}: Extending the LQG framework with an acceleration-augmented process model to improve disturbance rejection and state estimation accuracy.
    \item \textbf{Higher-order dynamics}: Exploiting differential flatness to model linear and angular accelerations in terms of control inputs, improving system predictability.
    \item \textbf{Open-source}: Ensuring reproducibility by providing an interactive codebase implementation @ \href{https://github.com/ansfl/LQG-A-IPOC}{\texttt{\textbf{GitHub}}}.
\end{enumerate}

\noindent As will be shown later, the improved estimability leads to 27\%-39\% larger stability regions and a 10\%-15\% reduction in crash rates. The remaining sections of the paper are structured as follows: Section \ref{sec:theory} presents the theoretical background, Section \ref{sec:method} outlines our methodology, Section \ref{sec:results} provides and analyzes the results, and Section \ref{sec:conc} concludes the study.

% --------------------------------------------------------- %
%                     Theoretical Background                %
% --------------------------------------------------------- %
\section{Theoretical Background} \label{sec:theory}
A dynamical system describes the evolution of a system's state over time. In general, a continuous-time dynamical system can be represented by the following differential equation
\begin{align}
\dot{\boldsymbol{x}}(t) = \boldsymbol{f}(\boldsymbol{x}(t), \boldsymbol{u}(t)) \ , \label{eq:sys}
\end{align}
where \( \boldsymbol{x}(t) \in \mathbb{R}^n \) represents the state vector, \( \boldsymbol{u}(t) \in \mathbb{R}^k \) denotes the control input vector, and $\boldsymbol{f}: \mathbb{R}^n \times \mathbb{R}^k \to \mathbb{R}^n$ is the state transition function, which defines the rate of change of the system’s state over time. The solution to this set of differential equations describes the trajectory of $\boldsymbol{x}(t)$ over time, starting from initial conditions $\boldsymbol{x}(0)$ and driven by occasional inputs $\boldsymbol{u}(t)$. The system is said to reach an equilibrium point, denoted by the subscript \( e \), when its dynamics no longer change and the states remain constant, defined by
\begin{align}
\dot{\boldsymbol{x}}(t) = \boldsymbol{f}(\boldsymbol{x}_e, \boldsymbol{u}_e) = \boldsymbol{0} \ . \label{eq:sys_eq}
\end{align} 
The nature of this equilibrium is analyzed through the concept of local linearity, which approximates the nonlinear dynamics \( \boldsymbol{f} \) by a linear model around of the equilibrium point \( (\boldsymbol{x}_e, \boldsymbol{u}_e) \). Based on the system's response to small perturbations, the equilibrium can be classified as: i) stable—if trajectories converge back to equilibrium, ii) unstable—if they diverge, or iii) saddle-point—if some perturbations lead the system back to equilibrium, while others cause divergence.
\\
In the analysis of the IPoC setup, stability is achieved when the configuration variables—namely, the pendulum angle and the cart position—remain constant, either at the bottom (unstable equilibrium) or the upright (stable equilibrium) positions. These steady-state conditions facilitate the description of the nonlinear system \eqref{eq:sys} around the equilibrium point \((\boldsymbol{x}_e, \boldsymbol{u}_e)\) as a linear approximation, via a first-order Taylor expansion \cite{arrowsmith1990introduction}, resulting in:
\begin{align}
\dot{\boldsymbol{x}}(t) \approx \mathbf{A}(\boldsymbol{x}_e, \boldsymbol{u}_e) (\boldsymbol{x}(t) - \boldsymbol{x}_e) + \mathbf{B}(\boldsymbol{x}_e, \boldsymbol{u}_e) (\boldsymbol{u}(t) - \boldsymbol{u}_e) , \label{eq:linearization}
\end{align}
where $\boldsymbol{x}(t)$ represent the state vector and $\boldsymbol{u}(t)$ the control inputs, and deviations from the equilibrium are given by
\begin{align}
\delta \boldsymbol{x}(t) = \boldsymbol{x}(t) - \boldsymbol{x}_e \ \ \text{and} \ \ \delta \boldsymbol{u}(t) = \boldsymbol{u}(t) - \boldsymbol{u}_e \ .
\end{align}
Assuming linear time-invariant (LTI) dynamics near the equilibrium points, the Jacobian matrices of the system with respect to the state and input are
\begin{align}
\mathbf{A}(\boldsymbol{x}_e, \boldsymbol{u}_e) = \frac{\partial \boldsymbol{f}}{\partial \boldsymbol{x}} \bigg|_{(\boldsymbol{x}_e, \boldsymbol{u}_e)} 
\, \text{and} \ \,
\mathbf{B}(\boldsymbol{x}_e, \boldsymbol{u}_e) = \frac{\partial \boldsymbol{f}}{\partial \boldsymbol{u}} \bigg|_{(\boldsymbol{x}_e, \boldsymbol{u}_e)} . \label{eq:Jacobians}
\end{align}
Since $\boldsymbol{x}_e$ and $\boldsymbol{u}_e$ are constants that vanish upon differentiation, the following identities hold
\begin{align}
\boldsymbol{x}(t) = \delta \boldsymbol{x}(t) + \boldsymbol{x}_e \ \ \text{and} \ \ \boldsymbol{u}(t) = \delta \boldsymbol{u}(t) + \boldsymbol{u}_e \ ,
\end{align}
simplifying the error dynamics to the standard LTI form
\begin{align}
\dot{\boldsymbol{x}}(t) = \mathbf{A} \, \boldsymbol{x}(t) + \mathbf{B} \, \boldsymbol{u}(t) \ . \label{eq:sys_lin}
\end{align}
To account for model discrepancies, the following zero-mean white Gaussian noise terms are sampled from their respective process (\(\mathbf{W}\)) and measurement (\(\mathbf{V}\)) noise covariance matrices
\begin{align}
\boldsymbol{w}(t) &\sim \mathcal{N}(\mathbf{0}, \mathbf{W}) \ , \quad \mathbb{E}[\boldsymbol{ww}^{T}] = \mathbf{W} \succ 0 \ , 
\\
\boldsymbol{v}(t) &\sim \mathcal{N}(\mathbf{0}, \mathbf{V}) \ \ , \quad \mathbb{E}[\boldsymbol{vv}^{T}] = \mathbf{V} \succeq 0 \ . 
\end{align}
This leads to the following stochastic state-space form
\begin{equation}
\begin{aligned} 
\dot{\boldsymbol{x}}(t) & = \mathbf{A} \, \boldsymbol{x}(t) + \mathbf{B} \, \boldsymbol{u}(t) + \boldsymbol{w}(t) \ ,  \\
\boldsymbol{y}(t) & = \mathbf{C} \, \boldsymbol{x}(t) + \boldsymbol{v}(t) \ , \label{eq:hidden}
\end{aligned}
\end{equation}
where the system states $\boldsymbol{x}(t) \in \mathbb{R}^n$ evolve according to the linear dynamics defined by matrices $\mathbf{A}$ and $\mathbf{B}$, and the observable outputs $\boldsymbol{y}(t) \in \mathbb{R}^m$ are a linear projection of the states through matrix $\mathbf{C}$, where typically $m<n$.

\subsection{Observer-based controller} 
Since only a subset of states are observable, a linear-quadratic estimator (LQE) is used to filter noise and estimate the unmeasured states. This allows the LQE to provide the feedback controller with real-time state estimates, as opposed to pure static feedback, where all states are assumed to be known.
% \\
In the context of LTI systems, their combination can be optimized by designing a linear-quadratic regulator (LQR) that minimizes penalties on both state deviations and control effort \cite{aastrom2012introduction, chrif2014aircraft}. 
\\
For simplicity, we omit the time argument, and the quadratic cost function over the time horizon $T$ is given by
\begin{align} \label{eq:cost}
\min_{\boldsymbol{u}(t)} J \triangleq \lim_{T \rightarrow \infty} \mathbb{E} \bigg[ \ \frac{1}{T} \int\limits_{0}^{T} \left( \boldsymbol{x}^\top \mathbf{Q} \boldsymbol{x} + \boldsymbol{u}^\top \mathbf{R} \boldsymbol{u} \right) dt \ \bigg] \ , 
\end{align}
subject to the state equations \eqref{eq:hidden}, where $\mathbf{Q}$ and $\mathbf{R}$ are positive semi-definite matrices that specify the relative weighting given to state deviations $\boldsymbol{x}(t)$ and control effort $\boldsymbol{u}(t)$, respectively. 
\\
Assuming that the pairs \( (\mathbf{A}, \mathbf{B}) \) and \( (\mathbf{A}, \mathbf{W}^{1/2}) \) are controllable, and \( (\mathbf{C}, \mathbf{A}) \) and \( (\mathbf{Q}^{1/2}, \mathbf{A}) \), are observable, the optimal feedback law that solves the linear quadratic Gaussian (LQG) problem is obtained by combining the LQR with a Kalman filter, which serves as the LQE. Such dynamics are given by
\begin{align}
\dot{\hat{\boldsymbol{x}}}(t) & = \mathbf{A} \, \hat{\boldsymbol{x}}(t) + \mathbf{B} \, \boldsymbol{u}(t) + \mathbf{L} \left( \boldsymbol{y}(t) - \mathbf{C} \hat{\boldsymbol{x}}(t) \right) \ , \label{eq:LQE} \\
\boldsymbol{u}(t) & = - \mathbf{K} \, \hat{\boldsymbol{x}}(t) \ , \label{eq:LQR}
\end{align}
where $\hat{\boldsymbol{x}}(t)$ is the estimated state. The optimal control gain matrix $\mathbf{K}$ is given by
\begin{align}
\mathbf{K} = \mathbf{R}^{-1} \mathbf{C}^\top \mathbf{S} \ ,
\end{align}
where $\mathbf{S}$ is the solution to the steady-state algebraic Riccati equation (ARE) in its matrix form
\begin{align}
\mathbf{A}^\top \mathbf{S} + \mathbf{S A} - \mathbf{S B R}^{-1} \mathbf{B^\top S + Q} = \boldsymbol{0} \ .
\end{align}
In a similar vein, the optimal KF gain $\mathbf{L}$ is given by
\begin{align}
\mathbf{L = P C^\top W}^{-1} \ ,
\end{align}
with $\mathbf{P}$ solving the associated ARE for the KF by
\begin{align}
\mathbf{A^\top P + P A - P C W}^{-1} \mathbf{C^\top P + V} = \boldsymbol{0} \ .
\end{align}
Let \( \boldsymbol{0} \) and \( \mathbf{I} \) represent the zero and identity matrices of the appropriate dimension, the true states \eqref{eq:hidden} and their corresponding estimates \eqref{eq:LQE} can be augmented as follows
\begin{align}
\frac{d}{dt}
\begin{bmatrix}
\boldsymbol{x} \\ \boldsymbol{\hat{x}}
\end{bmatrix} &= \begin{bmatrix}
    \mathbf{A} & \mathbf{-B K} \\ \mathbf{LC} & \mathbf{A-BK-LC}
\end{bmatrix} \begin{bmatrix} \boldsymbol{x} \\ \hat{\boldsymbol{x}} \end{bmatrix} + \begin{bmatrix} \mathbf{I} & \boldsymbol{0} \\ \boldsymbol{0} & \mathbf{L} \end{bmatrix} \begin{bmatrix} \boldsymbol{w} \\ \boldsymbol{v} \end{bmatrix} \, \notag , \\ 
\begin{bmatrix}
\boldsymbol{y} \\ \boldsymbol{u}
\end{bmatrix} &= \begin{bmatrix}
\mathbf{C} & \boldsymbol{0} \\ \boldsymbol{0} & -\mathbf{K}
\end{bmatrix} \begin{bmatrix} \boldsymbol{x} \\ \hat{\boldsymbol{x}} \end{bmatrix} + \begin{bmatrix} \boldsymbol{v} \\ \boldsymbol{0} \end{bmatrix} \ . \label{eq:states_est}
\end{align}
Fig.~\ref{fig:LQG} depicts a block diagram of the closed-loop system from \eqref{eq:states_est}, where the plant represents the linearized dynamics around the equilibrium points. The stability of the system is typically analyzed using the estimation error, defined as
\begin{align}
\boldsymbol{e}(t) = \boldsymbol{x}(t) - \hat{\boldsymbol{x}}(t) \ ,
\end{align}
which leads to the following block-state representation
\begin{align}
\begin{bmatrix}
    \dot{\boldsymbol{x}} \\ \dot{\boldsymbol{e}}
\end{bmatrix} = \begin{bmatrix}
    \mathbf{A-B K} & \mathbf{-B K} \\ \boldsymbol{0} & \mathbf{A-LC}
\end{bmatrix} \begin{bmatrix} \boldsymbol{x} \\ \boldsymbol{e} \end{bmatrix} + \begin{bmatrix} \mathbf{I} & \boldsymbol{0} \\ \mathbf{I} & -\mathbf{L} \end{bmatrix} \begin{bmatrix} \boldsymbol{w} \\ \boldsymbol{v} \end{bmatrix} \ . \label{eq:error}
\end{align}
The upper triangular structure of \eqref{eq:error} highlights the separation principle, as the observer dynamics ($\mathbf{A - LC}$) are decoupled from the closed-loop controller dynamics ($\mathbf{A - BK}$). Such isolation allows for independent stability analysis, ensuring that neither component interferes with the other while preserving overall closed-loop stability \cite{khalil1996robust, mahmudov2000controllability}. 
\begin{figure}[t] 
\begin{center}
\includegraphics[width=0.5\textwidth]{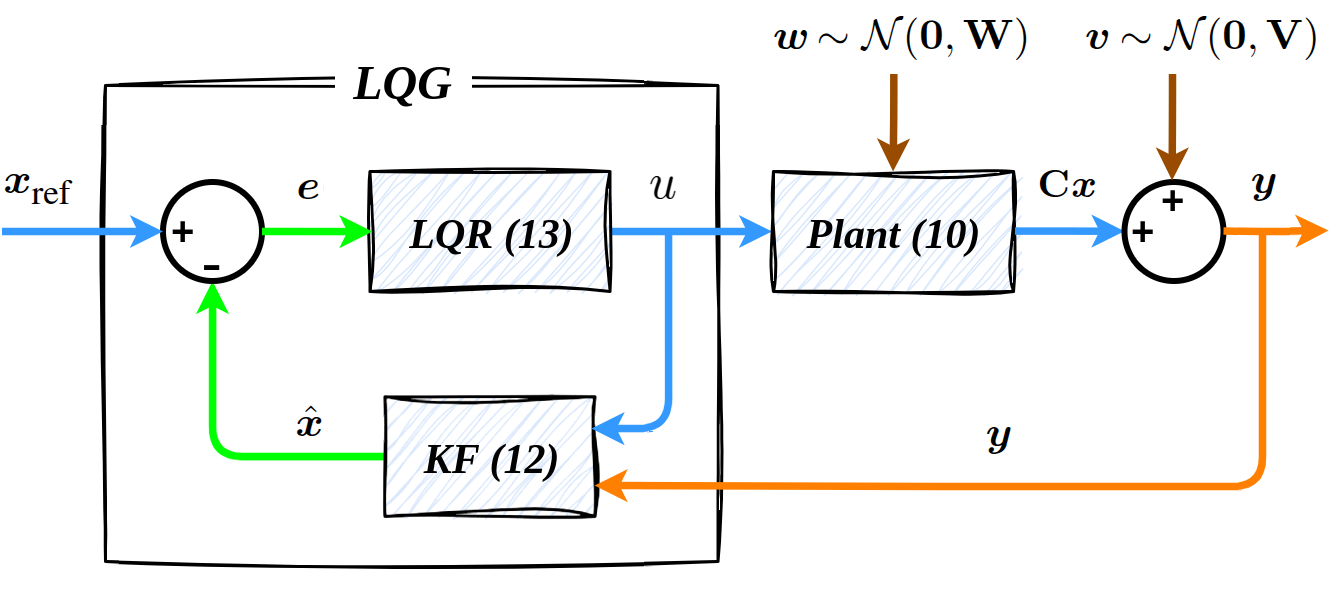}
\caption{LQG closed-loop dynamics diagram with a color-coded representation: true state (blue), measured state (orange), estimated state (green), and stochastic disturbances (brown).}
\label{fig:LQG}
\end{center}
\end{figure}

\subsection{Mathematical model}
The widely used nonlinear state equation from \cite{wang2010design, eide2011lqg, razmjooy2014comparison} describes the IPoC dynamics as
\begin{align}
    % \dot{ \boldsymbol{x} } = \boldsymbol{f(x,u)} \ \Leftrightarrow \ 
    \begin{pmatrix}
        \dot{x} \\ \ddot{x} \\ \dot{\theta} \\ \ddot{\theta}
    \end{pmatrix} =
    \begin{pmatrix}
        \frac{d x}{dt} \\ 
        \frac{ -\frac{1}{2} m\text{g} \sin 2\theta + m \ell \, \dot{\theta}^2 \sin \theta -\delta \dot{x} + u}{(M+m(1- \cos^2 \theta))} \\
        \frac{d \theta}{dt} \\
        \frac{(M+m)\text{g} \sin\theta - \frac{1}{2} m \ell \, \dot{\theta}^2 \sin 2\theta + \cos \theta ( \delta \dot{x} + u)  }{(M+m(1- \cos^2 \theta) ) \ell }
    \end{pmatrix} . \label{eq:sys_1}
\end{align}
Here, \( x \) is the cart's horizontal position (mass \( M \)), \( \theta \) is the pendulum's angular position (mass \( m \), length \( \ell \)), \( \delta \) is the friction coefficient between them, and the scalar \( u \) is the acceleration command applied to the cart, as shown in Fig.~\ref{fig:pendulum}.
\\
Let $h$ represent the nonlinear measurement model that maps the system states in \eqref{eq:sys_1} to the output as follows
\begin{align}
\boldsymbol{y}(t) = \boldsymbol{h}(\boldsymbol{x}(t)) \ .
\end{align}
Under ideal settings, both configuration variables are measurable, and the linearized measurement model is expressed as
\begin{align}
\mathbf{C} = \frac{\partial \boldsymbol{h}}{\partial \boldsymbol{x}}\bigg|_{\boldsymbol{x}=\hat{\boldsymbol{x}}} = 
\begin{bmatrix}
1 & 0 & 0 & 0 \\
0 & 0 & 1 & 0 
\end{bmatrix} \ .
\end{align}
However, in real-world mobile pendulum-like systems—such as Segways, hoverboards, or biped robots—the configuration variables cannot be measured directly. Instead, they can typically be estimated from inertial sensors, albeit at the cost of a rank-deficient observability matrix.
\\
Traditional control strategies, such as full-state feedback and PID controllers, can directly integrate noisy acceleration measurements, prioritizing practicality over optimality. In contrast, the LQG framework relies on an explicit dynamical model for the effective data fusion. This approach enables the KF to compare state predictions with the actual measurements, thereby enhancing estimation accuracy.
\\
Building on this, the following section extends the current configuration described in \(\eqref{eq:sys_1}\) to incorporate higher-order derivatives into the state-space model, paving the way toward an inertial-aided LQG.
\begin{figure}[t] 
\begin{center}
\includegraphics[width=0.46\textwidth]{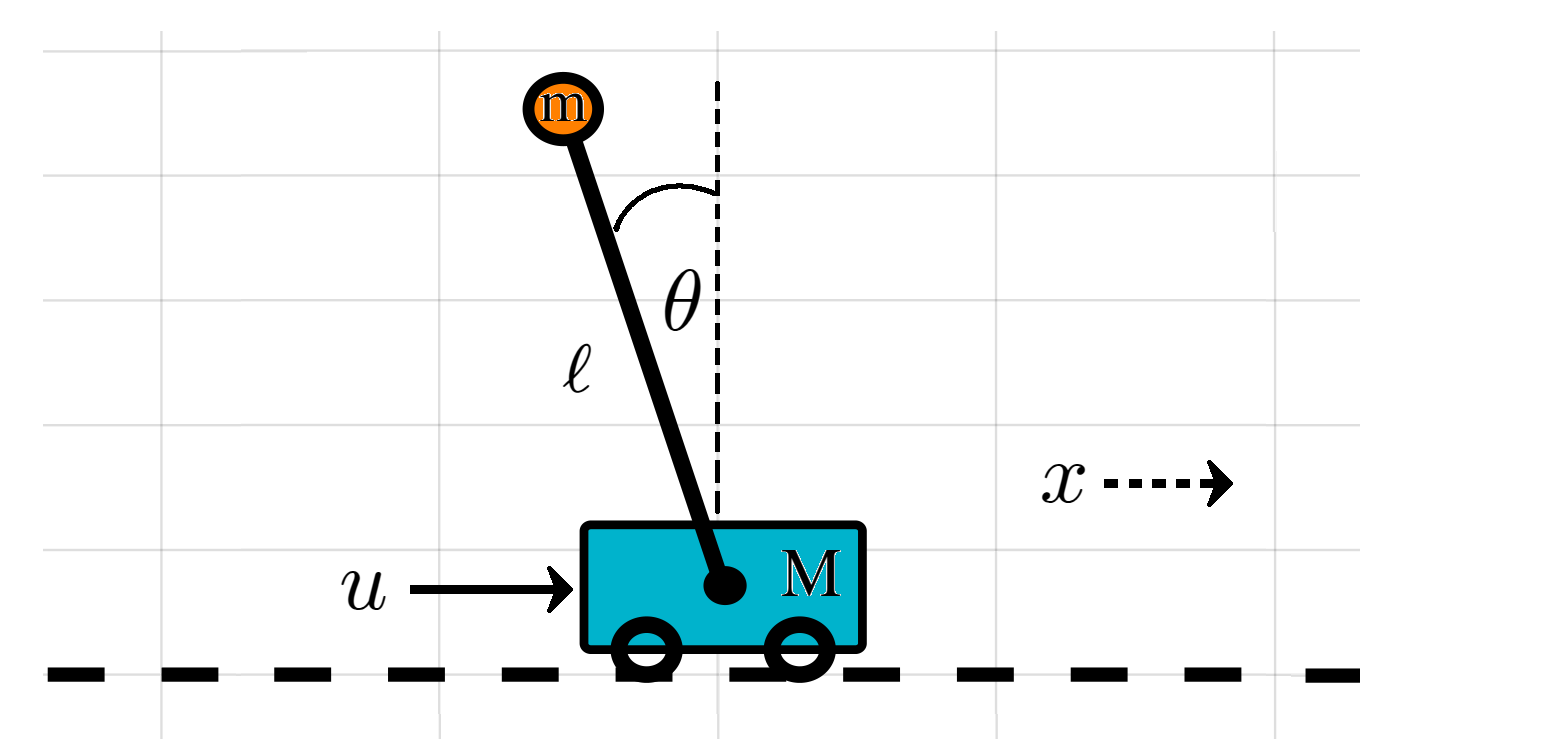}
\caption{A simplified mobile robot with an inverted pendulum system. Free body diagrams are commonly cited in \cite{wang2010design, eide2011lqg, razmjooy2014comparison, brunton2016koopman}.}
\label{fig:pendulum}
\end{center}
\end{figure}

\section{Proposed Methodology} \label{sec:method}
This section establishes the mathematical foundation for analyzing higher-order dynamics, which will serve as the basis for evaluating the LQG performance in the subsequent sections.

\subsection{Differential flatness} 
Differential flatness simplifies system dynamics and control by enabling direct representation in the flat output space, where states and inputs are explicitly determined from flat outputs and their derivatives, extending the notion of linear controllability to nonlinear systems.

\begin{Definition} \label{def_1}
A nonlinear system \eqref{eq:sys} with state vector \( \boldsymbol{x} \in \mathbb{R}^n \) and control input \( \boldsymbol{u} \in \mathbb{R}^m \), where \( m \leq n \), is differentially flat if there exists a set of flat outputs \( \boldsymbol\varepsilon \in \mathbb{R}^m \) such that
\begin{align}
\boldsymbol\varepsilon = \boldsymbol{h} ( \boldsymbol{x}, \boldsymbol{u}, \dot{\boldsymbol{u}}, \dots, \boldsymbol{u}^{(p)}) \ , \label{eq:phi_y}
\end{align}
allowing the state and control input to be expressed without requiring integration as
\begin{align}
\boldsymbol{x} = \boldsymbol\varphi_x(\boldsymbol\varepsilon, \dot{\boldsymbol\varepsilon}, \ddot{\boldsymbol\varepsilon}, \dots, \boldsymbol\varepsilon^{(q)}) \ , \label{eq:phi_x} % x = \varphi_x(y, \dot{y}, \ddot{y}, \dots, y^{(q)}) \ , \label{eq:phi_x} 
\\
\boldsymbol{u} = \boldsymbol\varphi_u(\boldsymbol\varepsilon, \dot{\boldsymbol\varepsilon}, \ddot{\boldsymbol\varepsilon}, \dots, \boldsymbol\varepsilon^{(q)}) \ , \label{eq:phi_u}
% u = \varphi_u(y, \dot{y}, \ddot{y}, \dots, y^{(q)}) \ , \label{eq:phi_u}
\end{align}
where $p$ and $q$ are finite integers, and \( \boldsymbol\varphi \) is any smooth, \(q\)-times differentiable mapping function  \cite{fliess1995flatness, murray1995differential, van1998differential, rigatos2015nonlinear}. 
\end{Definition}

\begin{prop} \label{Prop:I}
The controlled IPoC system defined in \eqref{eq:sys_1}, \( \dot{\boldsymbol{x}}(t) = \boldsymbol{f}(\boldsymbol{x}(t), u(t)) \), demonstrates differential flatness.
\end{prop}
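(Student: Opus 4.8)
The plan is to verify Definition \ref{def_1} directly. Since the IPoC has a single scalar input $u$ (so $m=1$), flatness requires exhibiting exactly one flat output $\varepsilon$ from which the full configuration $(x,\dot{x},\theta,\dot{\theta})$ and the input $u$ can be reconstructed algebraically, i.e. \emph{without integration}. First I would fix a geometrically natural candidate, the horizontal coordinate of the pendulum point mass,
\begin{align}
\varepsilon = x + \ell \sin\theta \ , \notag
\end{align}
which for a point pendulum coincides with its center of oscillation, and then attempt to write every state and the input as a function of $\varepsilon$ and its successive time derivatives.

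The reconstruction would proceed in stages. Treating $\varepsilon,\dot\varepsilon,\ddot\varepsilon,\dots$ as independent jet coordinates, I would first isolate $\theta$: writing the cart and pendulum rows of \eqref{eq:sys_1} as two coupled balances and eliminating $u$ between them returns the rod-alignment condition $\ddot\varepsilon\cos\theta = (g + \ddot z_p)\sin\theta$, where $z_p = \ell\cos\theta$ is the bob height. Once $\theta$ is available, $\dot\theta$ follows by differentiation, $x = \varepsilon - \ell\sin\theta$ and $\dot{x}$ are immediate, and finally $u$ is recovered by substituting the reconstructed state back into either row of \eqref{eq:sys_1}; this last step is always well posed because the common denominator $D = M + m\sin^2\theta > 0$ never vanishes.

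The hard part is the very first stage, the inversion for $\theta$. Because the rod length is rigid, the alignment relation couples $\theta$ with $\dot\theta$ and $\ddot\theta$ rather than reducing to a purely algebraic equation in the jet variables, so the naive geometric output has relative degree two and leaves nontrivial internal dynamics — the obstruction to integration-free recovery. I would therefore either (i) search for a genuine relative-degree-four linearizing output by solving $L_g\lambda = L_g L_f\lambda = L_g L_f^2\lambda = 0$ with $L_g L_f^3\lambda \neq 0$, whose solvability is equivalent, for a single-input system, to involutivity of $\mathrm{span}\{g,\,\mathrm{ad}_f g,\,\mathrm{ad}_f^2 g\}$, or (ii) settle for the local statement by invoking the implicit function theorem at the equilibrium, where $\cos\theta \neq 0$ and $D>0$ make the relevant Jacobian nonsingular. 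Establishing this nondegeneracy — and hence confining the claim to a neighborhood of the upright or downward equilibrium — is precisely the \emph{local} differential flatness the paper relies on; note too that controllability of the pair $(\mathbf{A},\mathbf{B})$, already assumed for \eqref{eq:sys_1}, makes the linearized dynamics flat as an LTI system, giving the linearized counterpart of the claim.
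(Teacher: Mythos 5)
Your proposal picks the same flat output as the paper, $\varepsilon = x + \ell\sin\theta$, but it stalls exactly where the paper's argument does its real work, and neither of your two proposed fixes closes the gap. The paper's proof rests on postulating the \emph{algebraic} relation $\ddot\varepsilon = \text{g}\sin\theta$ in \eqref{eq:df_1}: once that is granted, $\theta = \arcsin(\ddot\varepsilon/\text{g})$ and $x = \varepsilon - \ell\ddot\varepsilon/\text{g}$ follow immediately, differentiation yields $\dot{x},\dot{\theta},\ddot{x},\ddot{\theta}$ as functions of $\varepsilon,\dots,\varepsilon^{(4)}$, substitution into \eqref{eq:sys_1} gives $u = \varphi_u(\varepsilon,\dots,\varepsilon^{(4)})$, and the converse direction \eqref{eq:phi_y} is verified through the jerk expressions \eqref{eq:jerk_lin}--\eqref{eq:jerk_ang}. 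Your exact Newton balance $\ddot\varepsilon\cos\theta = (\text{g} + \ddot{z}_p)\sin\theta$ with $z_p = \ell\cos\theta$ is the physically complete relation, and, as you correctly observe, it is \emph{differential} in $\theta$ (it contains $\ddot\theta$ and $\dot\theta^2$ through $\ddot{z}_p$), so it cannot be inverted for $\theta$ without integration. The paper's \eqref{eq:df_1} is precisely your relation with the $\ddot{z}_p$ term discarded, which is legitimate only to first order near the equilibria --- this is exactly the sense in which the claimed flatness is local. Having retained the exact term, you needed either to justify dropping it near equilibrium or to carry out a local reconstruction anyway; you did neither, so the state map \eqref{eq:phi_x}, the input map \eqref{eq:phi_u}, and the inverse map \eqref{eq:phi_y} are never actually exhibited.

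Concretely, both of your escape routes are unexecuted, and one of them provably fails. Route (i) --- seeking an exact relative-degree-four linearizing output via involutivity of $\operatorname{span}\{\boldsymbol{g},\,\operatorname{ad}_{\boldsymbol{f}}\boldsymbol{g},\,\operatorname{ad}_{\boldsymbol{f}}^2\boldsymbol{g}\}$ --- is a dead end: for the cart--pole this distribution is not involutive (the system is a standard example of one that is \emph{not} static-feedback linearizable), and since a single-input system is flat if and only if it is static-feedback linearizable, no such exact output exists; pursued rigorously, this route would disprove exact flatness rather than prove the proposition. Route (ii) --- the implicit function theorem at the equilibrium --- does not engage the actual obstruction: the difficulty is not that an algebraic equation has a singular Jacobian, but that the defining relation is a second-order ODE in $\theta$, and the implicit function theorem cannot remove internal dynamics. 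What is missing is the step the paper makes explicit (if tacitly approximate): replace the exact balance by its equilibrium form $\ddot\varepsilon = \text{g}\sin\theta$, then carry out the full algebraic reconstruction of states and input and its inverse. Without that step, you have identified the crux of the problem but not proved the statement.
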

\begin{proof}
As a single-input system (\( u \in \mathbb{R} \)), the flatness property permits the selection of only one flat output (\( \varepsilon \in \mathbb{R} \)). 
% \\
Unlike higher-order derivatives, the configuration variables are differentially independent, making it challenging to establish a direct mapping between them. 
\\
As shown in Fig.~\ref{fig:pendulum}, this challenge can be addressed using a geometric approach that captures both the pendulum and the cart through their horizontal displacement. This quantity is chosen as the flat output and is defined by
\begin{align}
\varepsilon = x + \ell \sin \theta \ .\label{eq:df_0}
% y = h(x) = 
\end{align}
With two unknowns, an additional independent equation is required to uniquely solve for both state variables. To achieve this, the horizontal component of gravity is used to express
\begin{align}
\ddot{\varepsilon} = \text{g} \sin \theta \ . \label{eq:df_1}
\end{align}
In this manner, when in the upright position (\(\theta = 0\)), the displacement acceleration is zero, whereas at full deflection, it reaches the full gravitational acceleration, as shown below
\begin{align}
\ddot{\varepsilon}({\theta = 0}) &= 0 \ , \\
\ddot{\varepsilon}({\theta = \pm {\pi}/{2}}) &= \text{g} \ .
\end{align}
Next, the pendulum angle is extracted from \eqref{eq:df_1}, and substituting it into \eqref{eq:df_0} yields the cart position, resulting in both
\begin{align}
\theta &= \arcsin({\frac{\ddot{\varepsilon}}{\text{g}}}) \ , \\
x &= \varepsilon - \frac{\ddot{\varepsilon} \ell}{\text{g}} \ .
\end{align}
Consequently, their derivatives are readily obtained as follows
\begin{align}
\dot{x} &= \dot{\varepsilon} - \frac{\dddot{\varepsilon} \ell}{\text{g}} \ , \label{eq:df_2} \\
\dot{\theta} &= \frac{\dddot{\varepsilon}}{\sqrt{\text{g}^2 - \ddot{\varepsilon}^2}} \ , \label{eq:df_3}
\end{align}
thus satisfying \eqref{eq:phi_x}. Next, based on \eqref{eq:sys_1}, the control input $u$ appears to be governed by
\begin{align}
\ddot{x} &= \frac{ -\frac{1}{2} m\text{g} \sin 2\theta + m \ell \, \dot{\theta}^2 \sin \theta -\delta \dot{x} + u}{(M+m(1- \cos^2 \theta))} \ , \\
\ddot{\theta} &= \frac{(M+m)\text{g} \sin\theta - \frac{1}{2} m \ell \, \dot{\theta}^2 \sin 2\theta + \cos \theta ( \delta \dot{x} + u)  }{(M+m(1- \cos^2 \theta) ) \ell } \, .
\end{align}
% 
% This allows expressing all right-hand side elements
With three unknowns across two equations, we differentiate \eqref{eq:df_2} and \eqref{eq:df_3} to eliminate the state derivatives by
\begin{align}
\ddot{x} &= \frac{d}{dt} \dot{x}(\varepsilon) = \ddot{\varepsilon} - \frac{\varepsilon^{\textit{(4)}} \ell}{\text{g}} \ , \label{eq:df_6} \\ 
\ddot{\theta} &= \frac{d}{dt} \dot{\theta}(\varepsilon) = \frac{\varepsilon^{\textit{(4)}}({\text{g}^2 - \ddot{\varepsilon}^2)} + \dddot{\varepsilon}^2\ddot{\varepsilon}}{({\text{g}^2 - \ddot{\varepsilon}^2)^{3/2} }} \ . \label{eq:df_7}
\end{align}
%
% \footnote{A more detailed proof can be found in Appendix \ref{appendix:a}.}
Substituting \eqref{eq:df_6} and \eqref{eq:df_7} allows expressing the control input solely in terms of the flat outputs, thereby satisfying \eqref{eq:phi_u} as
\begin{align}
u = \varphi_u( \varepsilon, \dot{\varepsilon}, \ddot{\varepsilon}, \dddot{\varepsilon}, {\varepsilon}^{\textit{(4)}} ) \ .
\end{align}
Lastly, postulate \eqref{eq:phi_y} dictates that the mapping must be invertible in both directions, ensuring that the flat outputs are solely functions of the states and control inputs.
\\
To achieve this, all higher-order derivatives of the flat output must be determined, beginning with
\begin{align}
\dot{\varepsilon} = \frac{d}{dt} \varepsilon(t) = \dot{x} + \dot{\theta} \ell \cos \theta \ , 
\end{align}
followed by differentiation of \eqref{eq:df_6} which gives
\begin{align}
\dddot{x} &= \frac{d}{dt} \ddot{x}({\varepsilon}) = \dddot{\varepsilon} - \frac{{\varepsilon}^{\textit{(5)}}}{\text{g}} \ ,
\end{align}
and similarly, differentiating \eqref{eq:df_7} yields
\begin{align}
\dddot{\theta} = \frac{d}{dt} \ddot{\theta}(\varepsilon) = \frac{ \left( \alpha \, \varepsilon^{\textit{(5)}} + \dddot{\varepsilon}^3 \right) \alpha +
3 \, \beta ( \alpha \, \varepsilon^{\textit{(4)}} + \beta \dddot{\varepsilon} ) } {\alpha^{5/2}} \, ,
\end{align}
where $\alpha = ( \text{g}^2 - \ddot{\varepsilon}^2 )$, $\beta = (\ddot{\varepsilon} \dddot{\varepsilon})$, and $\gamma = M + m (1 - \cos^2 \theta)$ are defined for brevity.
\\
To remove the functional dependence on the third derivatives, their time expressions must be derived as well, while linear jerk is obtained by
\begin{align} \label{eq:jerk_lin}
\dddot{x} &= \left( - m\text{g} \dot{\theta} \cos 2\theta + m \ell \, \dot{\theta} (2 \ddot{\theta}\sin \theta + \dot{\theta}^2 \cos \theta ) - \delta \ddot{x} + \dot{u} \right) / {\gamma} \notag \\
+ & \left( \frac{1}{2} m\text{g} \sin 2\theta - m \ell \, \dot{\theta}^2 \sin \theta + \delta
\dot{x} - u \right) \frac{m \dot{\theta} \sin 2\theta }{\gamma^2} \, , 
\end{align}
% ]
and similarly, the angular jerk is derived by
\begin{align} \label{eq:jerk_ang}
\dddot{\theta} &= \bigg( (M+m)\text{g} \dot{\theta}\cos \theta - m\ell (\dot{\theta}\ddot{\theta} \sin 2\theta + \dot{\theta}^3 \cos 2\theta) \notag \\
- & \dot{\theta} \sin \theta (\delta \dot{x} + u) + \cos \theta (\delta \ddot{x} + \dot{u})  \bigg) / ({\gamma \ell}) - \frac{m \dot{\theta} \sin 2\theta}{\gamma^2 \ell} \notag \\
 \bigg( &(M+m)\text{g} \sin \theta -\frac{1}{2} m \ell \, \dot{\theta}^2 \sin 2 \theta + \cos \theta (\delta \dot{x} + u) \bigg) \, ,
\end{align}
where $\gamma(\theta)=(M + m \sin^2 \theta)$. 
\\
With an equal number of equations and unknowns, all state derivatives can be substituted, satisfying \eqref{eq:phi_y} as
\begin{align}
\varepsilon = \varphi_x( x, u, \dot{u} ) \ .
\end{align}
This confirms that the mapping defined by $\varphi_x$ and $\varphi_u$ is locally invertible, such that $\varepsilon$ and its derivatives uniquely determine the system states and inputs
\begin{align}
\begin{pmatrix}
\varepsilon \\ {\vdots} \\ \varepsilon^{\textit{(5)}}
\end{pmatrix}  \ \Leftrightarrow \ 
\begin{pmatrix}
x \\ u \\ \dot{u}
\end{pmatrix} \, ,
\end{align}
thereby establishing a diffeomorphism between the state-input space and the flat-output space, which ensures differential flatness as defined in Definition~\ref{def_1}.
\end{proof}
% 
% \\
Following Proposition~\eqref{Prop:I}, the solutions of the jerk states introduced in \eqref{eq:jerk_lin} and \eqref{eq:jerk_ang} are incorporated into the augmented IPoC (A-IPoC) state vector
\begin{align}
\boldsymbol{x} % = \begin{bmatrix} \ x & \dot{x} & \ddot{x} & \theta & \dot{\theta} & \ddot{\theta} \end{bmatrix}^\top \, , 
 = \big[ \ x \quad \dot{x} \quad \ddot{x} \quad \theta \quad \dot{\theta} \quad \ddot{\theta} \ \big]^\top \, , 
\end{align}
which evolves according to the following dynamics
\begin{align}
\dot{ \boldsymbol{x} } = 
% \boldsymbol{f(x,u)} \ \Leftrightarrow \ 
% \begin{pmatrix}
%     \dot{x} \\ \ddot{x} \\ \dddot{x} \\ \dot{\theta} \\ \ddot{\theta} \\ \dddot{\theta}
% \end{pmatrix} =
\begin{pmatrix}
    \frac{d x}{dt} \\ 
    \frac{ -\frac{1}{2} m\text{g} \sin 2\theta + m \ell \, \dot{\theta}^2 \sin \theta -\delta \dot{x} + u}{(M+m \sin^2 \theta)} \\
    \frac{d \ddot{x} }{dt} \\
    \frac{d \theta}{dt} \\
    \frac{(M+m)\text{g} \sin\theta - \frac{1}{2} m \ell \, \dot{\theta} \sin 2\theta + \cos \theta ( \delta \dot{x} + u)  }{(M+m \sin^2 \theta) \ell } \\
    \frac{d \ddot{\theta} }{dt}
\end{pmatrix} . \label{eq:sys_jerk}
\end{align}

\subsection{Linearized Dynamics}
While the nonlinear motion model captures the system's global behavior, its linearization is crucial for providing local validity in control design.
Pendulum-like systems exhibit two equilibrium points, as described in \eqref{eq:sys_eq}, along their trajectory: i) at the bottom position, a stable equilibrium where small perturbations cause the system to return to this point, creating a basin of attraction; and ii) at the upright position, an unstable equilibrium where any deviation results in divergence unless active control is applied.
\\
According to \eqref{eq:sys_lin}, applying a first-order approximation to the A-IPoC model \eqref{eq:sys_jerk} yields the state Jacobian, given by
\begin{align}
% \dot{\boldsymbol{x}}^{up} = 
\mathcal{A}&(\boldsymbol{x}_e, \boldsymbol{u}_e) = \frac{\partial \boldsymbol{f}}{\partial \boldsymbol{x}} \bigg|_{(\boldsymbol{x}_e, \boldsymbol{u}_e)} = \notag \\
& 
\begin{pmatrix}
0&        1&        0&                  0&                  0& 0 \\
0& -\frac{\delta}{M}&    0&    -\frac{m \text{g}}{M} &                  0& 0 \\
0&        0&     -\frac{\delta}{M}&                  0&      \frac{m \text{g}}{M}& 0 \\
0&        0&        0&                  0&                  1& 0 \\
0& -\frac{\delta}{M \ell}&        0& -\frac{(M + m)\text{g}}{M \ell}&                  0& 0 \\
0&        0& -\frac{\delta}{M \ell}&                  0& -\frac{(M + m)\text{g}}{M \ell} & 0
\end{pmatrix} \ , 
\end{align}
and similarly, the control Jacobian is given by
\begin{align}
\mathcal{B}(\boldsymbol{x}_e, \boldsymbol{u}_e) = \frac{\partial \boldsymbol{f}}{\partial \boldsymbol{u}} \bigg|_{(\boldsymbol{x}_e, \boldsymbol{u}_e)} = 
\begin{pmatrix}
0 & 0 \\
\frac{1}{M} & 0 \\
0 & \frac{1}{M} \\
0 & 0 \\
-\frac{1}{M \ell} & 0 \\
0 & -\frac{1}{M \ell}
\end{pmatrix}  \ . % \boldsymbol{u} \ .
\end{align}
Hence, linearizing around the upright equilibrium point,
\begin{align}
(\boldsymbol{x}_e, \boldsymbol{u}_e) = \left(
\begin{bmatrix}
    x_{\text{ref}} & 0 & 0 & \pi & 0 & 0 
\end{bmatrix}^\top , \begin{bmatrix} 0 & 0 \end{bmatrix}^\top \right) \, ,
\end{align}
results in the linearized A-IPoC dynamics expressed as
\begin{align}
\dot{\boldsymbol{x}} = \mathcal{A} \, \boldsymbol{x} + \mathcal{B} \, \boldsymbol{u} \ . %\in \ \mathbb{R}^6 \ .
\end{align}
%  Finally, the newly A-IPoC model can be leveraged into an 
% 
Consequently, the derived measurement model is given by 
\begin{align} \label{eq:sys_obs_adapt}
\mathcal{C} = \frac{\partial \boldsymbol{h}}{\partial \boldsymbol{x}}\bigg|_{\boldsymbol{x}=\hat{\boldsymbol{x}}} = 
\left(
\begin{array}{cccccc}
  1 & 0 & 0 & 0 & 0 & 0 \\
  \cmidrule[.3mm]{1-6}
  % \hline
  0 & 0 & 1 & 0 & 0 & 0 \\
  0 & 0 & 0 & 0 & 1 & 0 \\
\end{array}
\right) \ , 
\end{align}
emphasizing two operational modes: continuous inertial sensing at the bottom, facilitated by accelerometers and gyroscopes, and infrequent position data at the top, when available. This establishes the system's output as
\begin{align}
\boldsymbol{y} = \mathcal{C} \, \boldsymbol{x} \ . % \in \ \mathbb{R}^3 \ .
\end{align}
\begin{Remark} 
To derive the system's Jacobians at the pendulum's bottom equilibrium, negate the signs of the entries in the 5th and 6th rows of both the matrices $\mathcal{A}$ and $\mathcal{B}$.
\end{Remark} 
% 
% \begin{Remark}
% The KF gain is computed recursively, whereas the LQR state-feedback gain is precomputed offline and remains fixed. Thus, the LQG framework guarantees an optimal quadratic cost for the chosen $\mathbf{Q}$ and $\mathbf{R}$ matrices, but does not provide a global optimum across all possible weightings.
% \end{Remark}
%  \noindent 
% ------------------------------------------------------- %
% ------------------------------------------------------- %
% \begin{Remark}
% Readers are encouraged to explore the IPoC's dynamics, including time-domain behavior, frequency-domain response, and damping of the non-minimum phase effect, using the provided code @ {\small \tt{\url{http://GitHub/ANSFL/A-IPoC}}} .
% \end{Remark}
%, followed by an analysis of its robustness, and concluding with a comparison to the conventional model.

\subsection{Stability region} \label{sec:Monte}
Controlling the IPoC involves two simultaneous objectives: (i) preventing the pendulum from falling and (ii) steering the system toward the desired position, all while maintaining an upright posture. While stability guarantees have been extensively explored in \cite{bittanti1991simple, arelhi1997lqg, barya2010comparison}, these analytical methods assumed noise-free conditions and overlooked the KF performance and its coupling with the controller, as demonstrated in \eqref{eq:error}.
\\
To address this, Algorithm~\ref{alg:MC} proposes a statistical sampling procedure in which Monte Carlo simulations are used to vary the initial conditions, generating a 2D stability map. 
\\
Initially, linear and angular velocities are randomly sampled from the admissible sample space \( \boldsymbol{\Gamma}_0 \), defined as
\begin{align}
   \boldsymbol{\Gamma}_0 = \dot{\boldsymbol{x}}_0 \times \dot{\boldsymbol{\theta}}_0 = [-10 \, , 10] \times \left[-\pi \, , \pi \right] \ ,
\end{align}
with each identified stable pair being stored in ($\dot{\boldsymbol{x}}_s, \dot{\boldsymbol{\theta}}_s$). 
\\
After \( i_{\text{max}} \) iterations, outliers are discarded, and interpolation is performed between the outermost points to create a closed convex hull. 
% \\
The boundaries of this hull mark the LQG marginal stability, with the inner (stable) region $\mathcal{S}$ defined as
\begin{align}
\mathcal{S} = \text{conv} \left( \dot{\boldsymbol{x}}_s \, , \, \dot{\boldsymbol{\theta}}_s \right) \subseteq \boldsymbol\Gamma_0 \ .
\end{align}

\begin{algorithm}[h] 
\caption{Identifying stability boundaries through Monte Carlo random sampling.}
\SetAlgoLined
\For{$i \leftarrow 1$ to $i_{\text{max}}$}{
    $\dot{x}_0 \gets {U}(-10, 10)$; $\dot{\theta}_0 \gets {U}(-\pi, \pi)$; \\
    $\boldsymbol{x}_0 \gets \big( x_0, \dot{x}_0, {\theta}_0, \dot{\theta}_0 \big)$ \hspace{12mm} \tcp{Initialize} 
    \textsf{solve} \hspace{.8mm}
    $\dot{\boldsymbol{x}}(t) = f(\boldsymbol{x}(t), u(t))$ \\
    % 
    % \textsf{where} $\boldsymbol{x}(0) = \boldsymbol{x}_0 \quad \text{and} \quad u(t) = - K  \hat{\boldsymbol{x}}(t)$ \\
    % 
    \eIf{$\boldsymbol{x}(t \gg 0) \, \rightarrow \boldsymbol{x}_e$}{
        $  \mathcal{S}_{[ \dot{x}_0, \dot{\theta}_0 ]} \gets \mathcal{S}_{[ \dot{x}_0, \dot{\theta}_0 ]} + 1$ \hspace{5mm} \tcp{Stable}
    }
    {
        $  \mathcal{S}_{[ \dot{x}_0, \dot{\theta}_0 ]} \gets \mathcal{S}_{[ \dot{x}_0, \dot{\theta}_0 ]} - 1$ \hspace{5mm} \tcp{Unstable} 
    }
} 
\KwRet{$\mathcal{S}(\mathcal{S}>0)$} \hspace{9mm} \tcp{Stability region} \label{alg:MC}
\end{algorithm}

% ------------------------------------------------------ %
% ------------------------------------------------------ %
% ------------------------------------------------------ %

\section{Results and Analysis} \label{sec:results}
This section begins with validating the LQG controller's performance using the proposed A-IPoC motion model, followed by a comparison with the conventional IPoC implementation.

\subsection{Time-domain performance metrics} \label{sec:metrics}
The effectiveness of the proposed LQG controller is assessed across two distinct time frames. First, the transient response is extracted from the system outputs, and key performance metrics—including peak time (\( t_p \)), transient time (\( t_{\text{tr}} \)), and settling time (\( t_s \))—are computed based on a \( \pm 2\% \) tolerance of the output's final value. The total control effort exerted over the operational period (\( T \)) is then computed as
\begin{align}
\text{U}_{tot} = \int_0^T \| u(t) \| \, dt \, . \label{eq:u_tot}
\end{align}
To account for mechanical limitations, the actuator reaches its peak output (saturation) at $u_{\text{max}} = 3 \text{g}$ [m/s$^2$], which is modeled by the following clipping function
\begin{align}
\text{sat}(u) =  
\begin{cases}  
\ \ u_{\text{max}}, & \ \, u \ > u_{\text{max}} \\  
\ \ u,             & \hspace{1mm} |u| \leq u_{\text{max}} \\  
-u_{\text{max}}, & \hspace{2mm} u \ < -u_{\text{max}} 
\end{cases} \ . 
\end{align}
% 
% \\
Once the transients have decayed, the end-point performance is evaluated by analyzing the steady-state error ($e_{ss}$). Both transient and steady-state metrics are then combined to provide an overall performance assessment. The first metric considered is the integral of absolute error (IAE), defined as
\begin{align}
    \text{IAE} = \int_0^T |e(t)| \, dt \ ,
\end{align}
treating both large and small deviations equally as undesirable. 
The second metric, the Integral of time-weighted absolute error (ITEA), emphasizes long-term stability by integrating the absolute error weighted by time
\begin{align}
\text{ITEA} = \int_0^T t \cdot |e(t)| \, dt \ .
\end{align}
Finally, to ensure consistency and mitigate numerical effects, the normalized time-to-go expression is employed to facilitate the visualization of spatial performance
\begin{align}
\Bar{t}_{\text{go}} = \frac{T-t}{T} \quad \forall \quad t \leq T \ . \label{eq:t_go}
\end{align}
\begin{table}[h]
\centering
\caption{Physical parameters used in all simulations.}
\renewcommand{\arraystretch}{1.35}
\begin{tabular}{|c|c|c|c|}
\hline 
Symbol & Parameter & Value & Unit \\ \hline
% ------------------------------------------- %
m & Pendulum mass & 1.0 & kg \\
M & Cart mass & 5.0 & kg \\
$\text{g}$ & Local gravity & 9.81 & ${\text{m}}/{\text{s}^2}$ \\ 
$\ell$ & Pendulum length & 1.25 & m \\
$\delta$ & Friction coefficient & 0.8 & ${\text{kg}}/{\text{s}}$ \\
\hline \hline
$T$ & Total time & 15 & [s] \\
$\Delta t$ & Time interval & 0.005 & [s] \\
% $\Bar{\zeta}$ & Damping ratio & \hl{0.063} & [-] \\
% $\Bar{\omega}_n$ & Natural frequency & \hl{1.117} & [rad/s] \\ 
% $\Bar{\lambda}$ & Logarithmic decrement & 0.661 & [-] \\ 
\hline
% ------------------------------------------- %
\end{tabular} \label{t:specs}
\end{table} 
\begin{figure}[b]
\begin{center}
\includegraphics[width=0.5\textwidth]{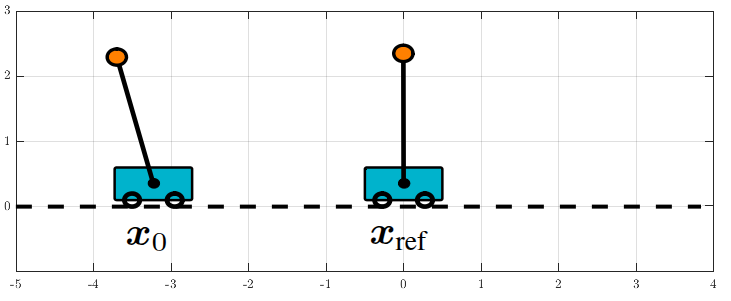}
\caption{Stabilization task: From initial state to target position.}
\label{f:pendulum}
\end{center}
\end{figure} 
% % for both the IPoC and the proposed A-IPoC.
\\
Table~\ref{t:specs} lists the simulation parameters used in this study, while Fig.~\ref{f:pendulum} depicts the stabilization task. The pendulum starts in a tilted and displaced state, $\boldsymbol{x}_{0}$, and the controller aims to guide it steadily to the reference target at the origin. Consequently, the setpoint vector is defined as $\boldsymbol{x}_{\text{ref}} = \mathbf{0} \in \mathbb{R}^6$, where any nonzero entries are interpreted as errors.

\subsection{Performance analysis}
We begin by evaluating the A-IPoC performance, with a key focus on the update ratio, denoted as \( \rho \), which determines the frequency of corrections following state predictions. 
\\
Figure~\ref{fig:states_1} illustrates the ideal baseline of \( \rho = 1 \), where each prediction step is immediately followed by an update, ensuring stable estimation of all six states in the model. Within the linearized region, the LQG performs well: the Kalman filter (KF) estimates remain closely aligned with the true states, while the LQR generates timely control commands that accurately steer the states toward the reference target.
\begin{figure}[t]
\begin{center}
\includegraphics[width=0.495\textwidth]{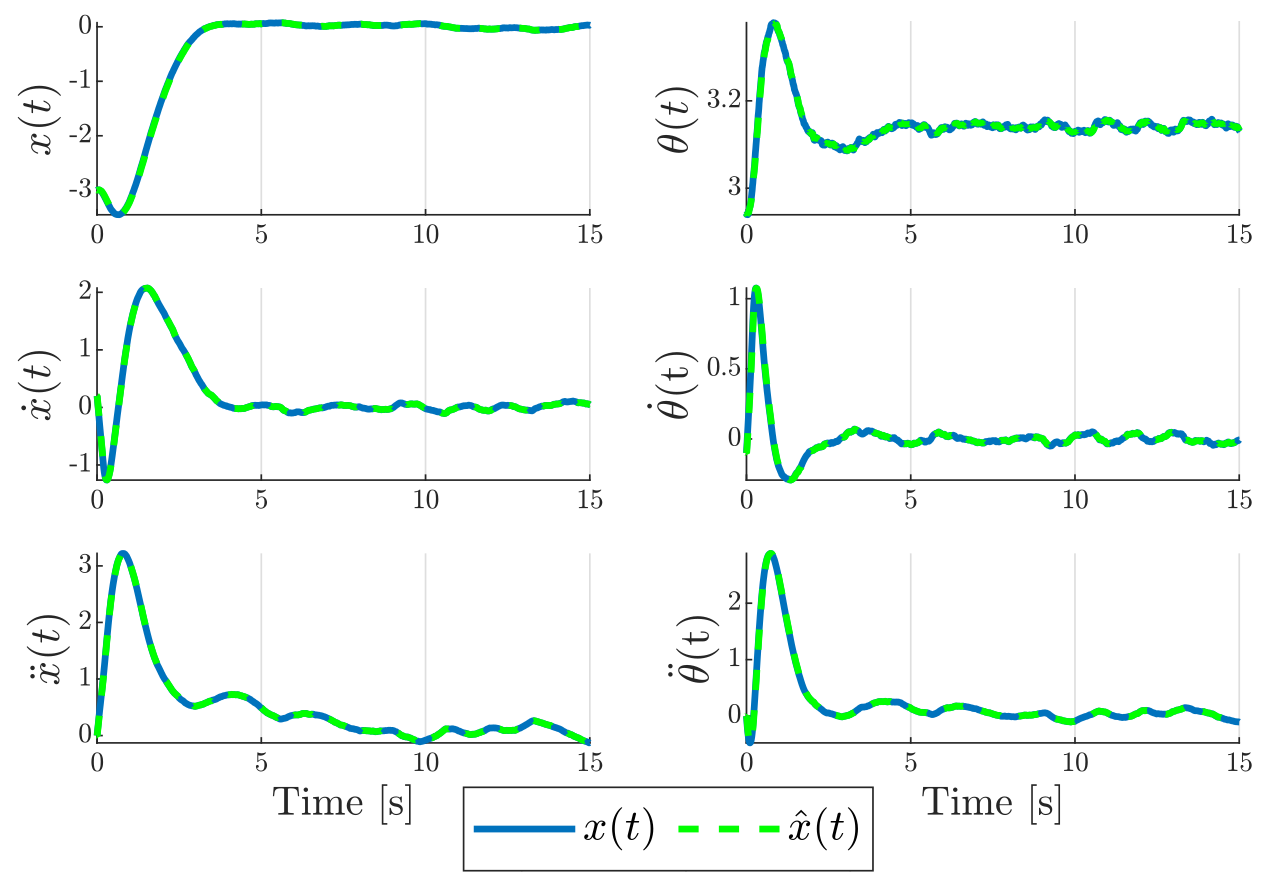}
\caption{LQG response for $\rho =1$. Left: Linear states; Right: Angular states. Blue: True states; Green: Estimated states.}
\label{fig:states_1}
\end{center}
\end{figure} 
\\
However, in real-world scenarios, communication errors result with measurement dropouts, effectively reducing the update ratio (\(\rho < 1\)). As a result, prolonged unaided periods lead to drift—commonly known as dead reckoning—forcing the controller to rely on increasingly inaccurate state estimates. 
\begin{figure}[b]
\begin{center}
\includegraphics[width=0.495\textwidth]{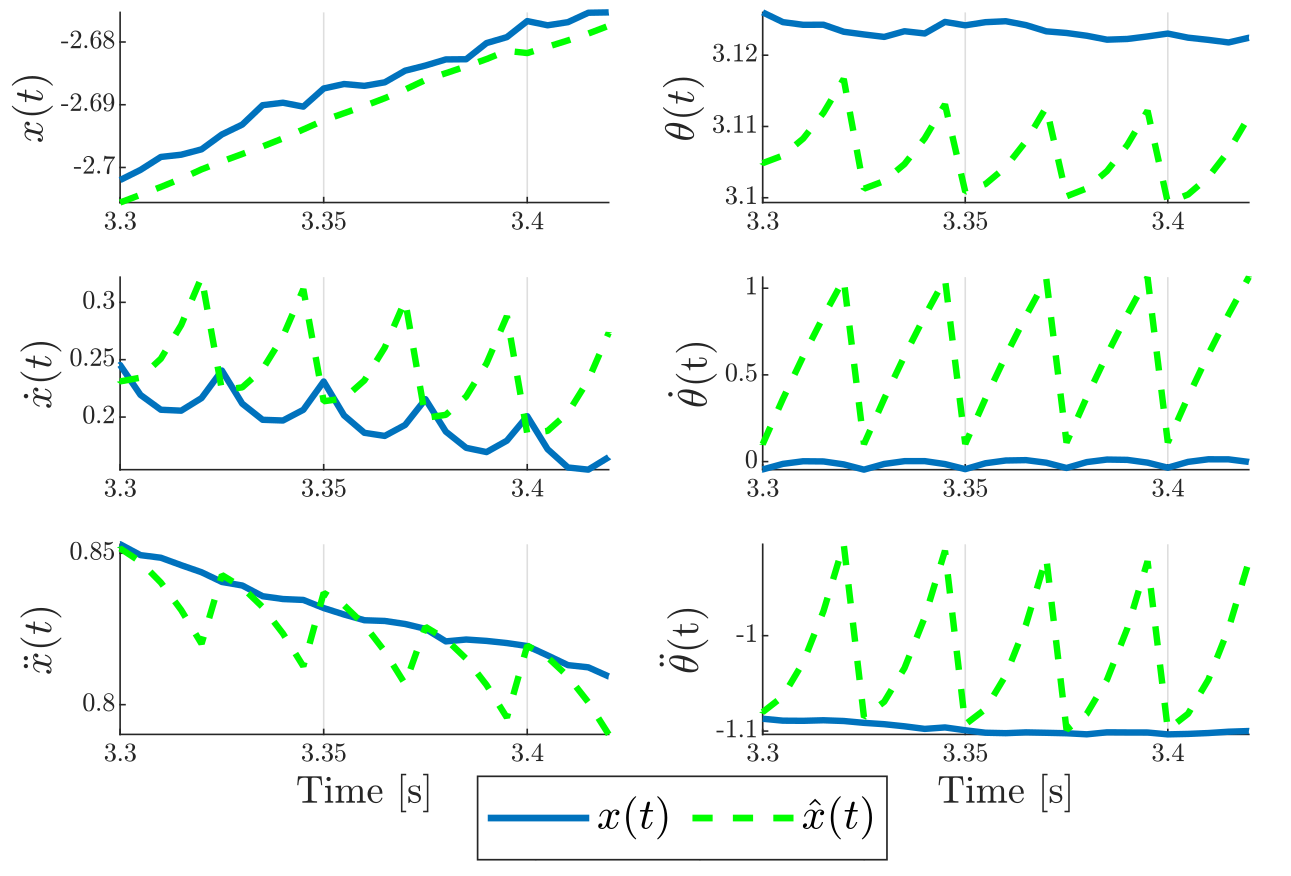}
\caption{LQG response for $\rho = 0.2$. Left: Linear states; Right: Angular states. Blue: True states; Green: Estimated states.}
\label{fig:states_2}
\end{center}
\end{figure} 
\\
Fig.~\ref{fig:states_2} illustrates this effect when the update ratio drops to 1:5 (\(\rho = 0.2\)), exposing the LQG to significantly fewer updates.
% \\
As observed, all subfigures display varying levels of two key effects: i) discrepancy—between the extrapolated estimates and true states, characterized by the sawtooth-shaped profile (dashed green line), and ii) instability—of the true states themselves (blue), resulting from the controller's underperformance due to a lack of external information.
\\
Fig.~\ref{fig:states_3} examines the sensitivity thresholds of the observer-controller (KF-LQR) coupling across different update ratios. In the upper row (\(\rho = 0.5\)), small and bounded errors are observed, with the corresponding control action displayed in the rightmost column. This performance results from the predictive model mitigating uncertainty, maintaining stability despite the reduced update frequency.
\\
In the middle row, with an increased update discontinuity (\(\rho = 0.1\)), the error patterns become thicker and noisier. This is reflected in the control action, which exhibits higher fluctuations as it struggles to maintain stability in the configuration variables. The result is a stable but unsmooth response, albeit at the cost of increased control effort.
\begin{figure}[t]
\begin{center}
\includegraphics[width=0.49\textwidth]{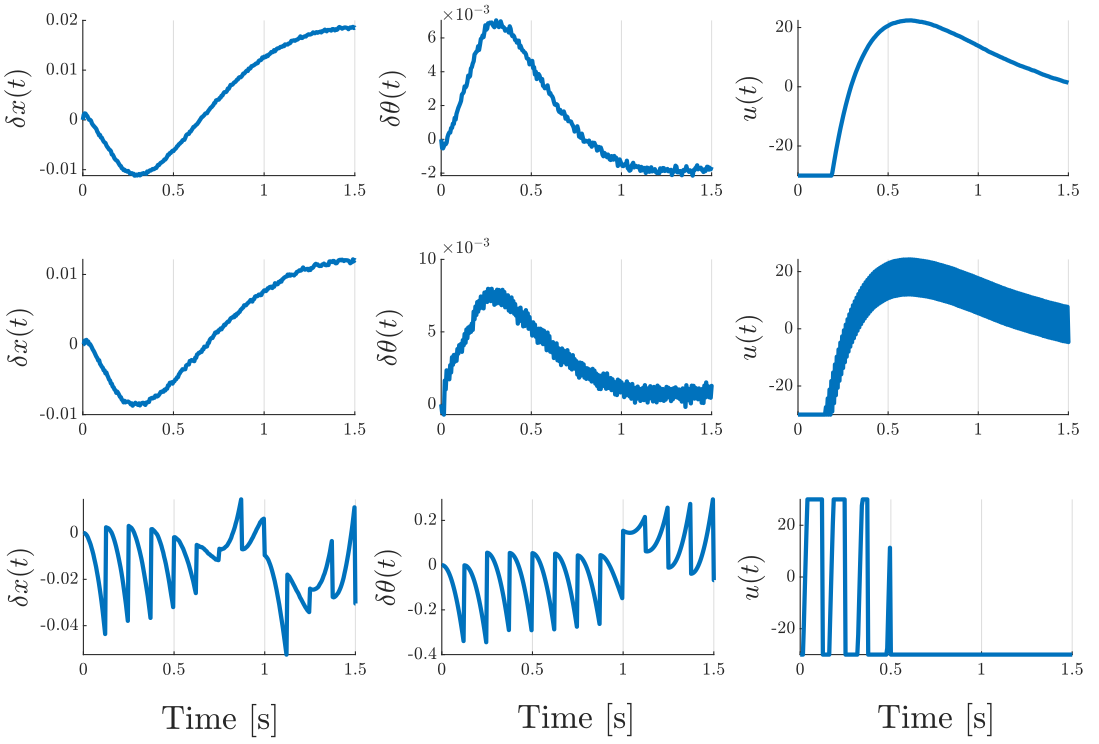}
\caption{Left: position error, Middle: Angular error, Right: Control signal. Update ratios (top to bottom): $\rho = \{ 0.5, 0.1, 0.01 \}$.}
\label{fig:states_3}
\end{center}
\end{figure} 
\begin{table}[b]
\centering
\caption{Transient response analysis of the LQG controller.}
\renewcommand{\arraystretch}{1.25}
\begin{tabular}{|c|c|c|c||c|c|}
\hline 
$\rho$ & $t_p$ [s] & $t_{tr}$ [s] & $t_s$ [s] & $u_{sat}$ [\%] & $\text{U}_{tot}$ [N$\cdot$s] \\ \hline 
% ------------------------------------------- %
\multirow{2}{*}{1.0} & 2.98 & 3.06 & 9.45 & \multirow{2}{*}{0} & \multirow{2}{*}{54.34} \\ 
& 0.98 \cellcolor{Gray} & 4.82 \cellcolor{Gray} & 5.63\cellcolor{Gray} & &  \\ \hline 
% ------------------------------------------- %
\multirow{2}{*}{0.5} & 5.14 & 7.81 & 10.75 & \multirow{2}{*}{0} & \multirow{2}{*}{98.13} \\ 
& 1.20 \cellcolor{Gray} & 6.64 \cellcolor{Gray} & 7.13 \cellcolor{Gray} & & \\ \hline 
% ------------------------------------------- %
\multirow{2}{*}{0.1} & 7.52 & 14.75 & - & \multirow{2}{*}{2.12} & \multirow{2}{*}{178.21} \\ 
& 3.31 \cellcolor{Gray} & 13.21 \cellcolor{Gray} & -\cellcolor{Gray} & & \\ \hline 
% ------------------------------------------- %
\multirow{2}{*}{0.05} & 12.01 & - & - & \multirow{2}{*}{13.31} & \multirow{2}{*}{320.96} \\ 
& 4.40 \cellcolor{Gray} & -\cellcolor{Gray} & -\cellcolor{Gray} & & \\ \hline 
% ------------------------------------------- %
\multirow{2}{*}{0.01} & - & - & - & \multirow{2}{*}{94.20} & \multirow{2}{*}{$>$ 1e4} \\ 
& -\cellcolor{Gray} & -\cellcolor{Gray} & -\cellcolor{Gray} & & \\ \hline 
% ------------------------------------------- %
\end{tabular} \label{t:err_1}
\end{table} 
\\
Finally, as the update ratio reaches its minimum value (\(\rho = 0.01\)), the controller exhibits abrupt switching between saturation boundaries, indicating a loss of stability. The state estimates drive the system toward divergence, and the increasing drift leads to inaccurate or insufficient control feedback, causing perturbations the controller cannot recover from. 
\\
Table~\ref{t:err_1} presents numerical data supporting these arguments, based on the metrics outlined in Sec.~\ref{sec:metrics}, including position errors (top, unfilled), angular errors (bottom, shaded in gray), and the percentage of actuator saturation (rightmost column).
\\
As expected, performance gradually degrades as \(\rho\) decreases, leading to longer durations to reach the \(\pm 2\)\% error band. 
\\
While the controller can stabilize the pendulum for mid-range \(\rho\) values, the cart's position drift expands its trajectory, resulting in higher energy expenditures. 
\\
At the extreme, as shown in the bottom rows, smaller \(\rho\) values induce pronounced oscillations that prevent the system from settling (blank cells) or, in some cases, lead to instability, as indicated by actuator saturation.
% 
% This degradation is driven by increased drift, which amplifies uncertainty in the state estimates.
% This requires larger control efforts to achieve the same control objectives.
% the non-minimum phase effect. | Consequently, the controller responds more aggressively to restore balance, resulting in . Regarding the endpoint error, 
% 
\\
Next, we investigate the balance between the two weighting matrices $\mathbf{Q}$ and $\mathbf{R}$, as defined in \eqref{eq:cost}. The interplay between these matrices is crucial for optimizing the trade-off between state error minimization and control effort, leading to the identification of four distinct weighting profiles \cite{brunton2016koopman}:
\begin{itemize}  
\item Poorly chosen $\mathbf{Q}$ and $\mathbf{R}$: Suboptimal performance, excessive control effort, and potential instability.  
\item \( \mathbf{Q} \gg \mathbf{R} \): Rapid and aggressive response with high control effort, potentially leading to actuator saturation.  
\item \( \mathbf{R} \gg \mathbf{Q} \): Energy-efficient control with reduced actuation effort but at the cost of sluggish response.  
\item Well-tuned $\mathbf{Q}$ and $\mathbf{R}$: Ideal effort-performance trade-off, promoting stability and efficient actuation.
\end{itemize}
While the last case is generally preferred, achieving this balance involves delicate tuning, which often conflicts with other system requirements. As a result, we adjust our weightings to reach a balanced compromise, referred to as 'ours'.
\\
Table~\ref{t:profiles} presents various tuning weights tailored to specific use cases. For simplicity, the matrices $\mathbf{Q}$ and $\mathbf{R}$ are defined using vectors of ones, \( \mathbf{1}_n \), each scaled uniquely to achieve an appropriate time-energy trade-off\footnote{See code @ {\small \tt{\url{https://GitHub.com/ANSFL/LQG-A-IPOC}}}}.
% \\
% 
\begin{table}[h]
\centering
\caption{Heuristic tuning for different operational modes.}
\renewcommand{\arraystretch}{1.4}
\begin{tabular}{|c|c|c||c|c|}
\hline 
Mode & diag($\mathbf{Q}$) & diag($\mathbf{R}$) & Time [s] & $\text{U}_{tot}$ [N$\cdot$s] \\ \hline
% ------------------------------------------- %
Low-power & $0.1 \cdot \mathbf{1}_6$ & $10 \cdot \mathbf{1}_2$ & 21.23 & 0.51\\ \hline
% ------------------------------------------- %
Utility & $\mathbf{1}_6$ & $\mathbf{1}_2$ & 14.32 & 10.86 \\ \hline
% ------------------------------------------- %
\textbf{Ours} & $\mathbf{1}_6$ & $0.1 \cdot \mathbf{1}_2$ & 9.45 & 54.34 \\ \hline
% ------------------------------------------- %
Agile & $10 \cdot \mathbf{1}_6$ & $0.01 \cdot \mathbf{1}_2$ & 2.88 & 653.40 \\ \hline 
% ------------------------------------------- %
\end{tabular} \label{t:profiles}
\end{table}

\subsection{Performance comparison}
Having evaluating the performance of the LQG controller for our proposed A-IPoC, we will now assess whether it provides meaningful improvements over the conventional IPoC setup. Fig.~\ref{fig:conv_position} compares both LQG implementations, focusing on the time evolution of the configuration variables across various values of \(\rho\).
% where both are based on the same physical parameters outlined in Table~\ref{t:specs}.
% 
\begin{figure}[t]
\begin{center}
\includegraphics[width=0.5\textwidth]{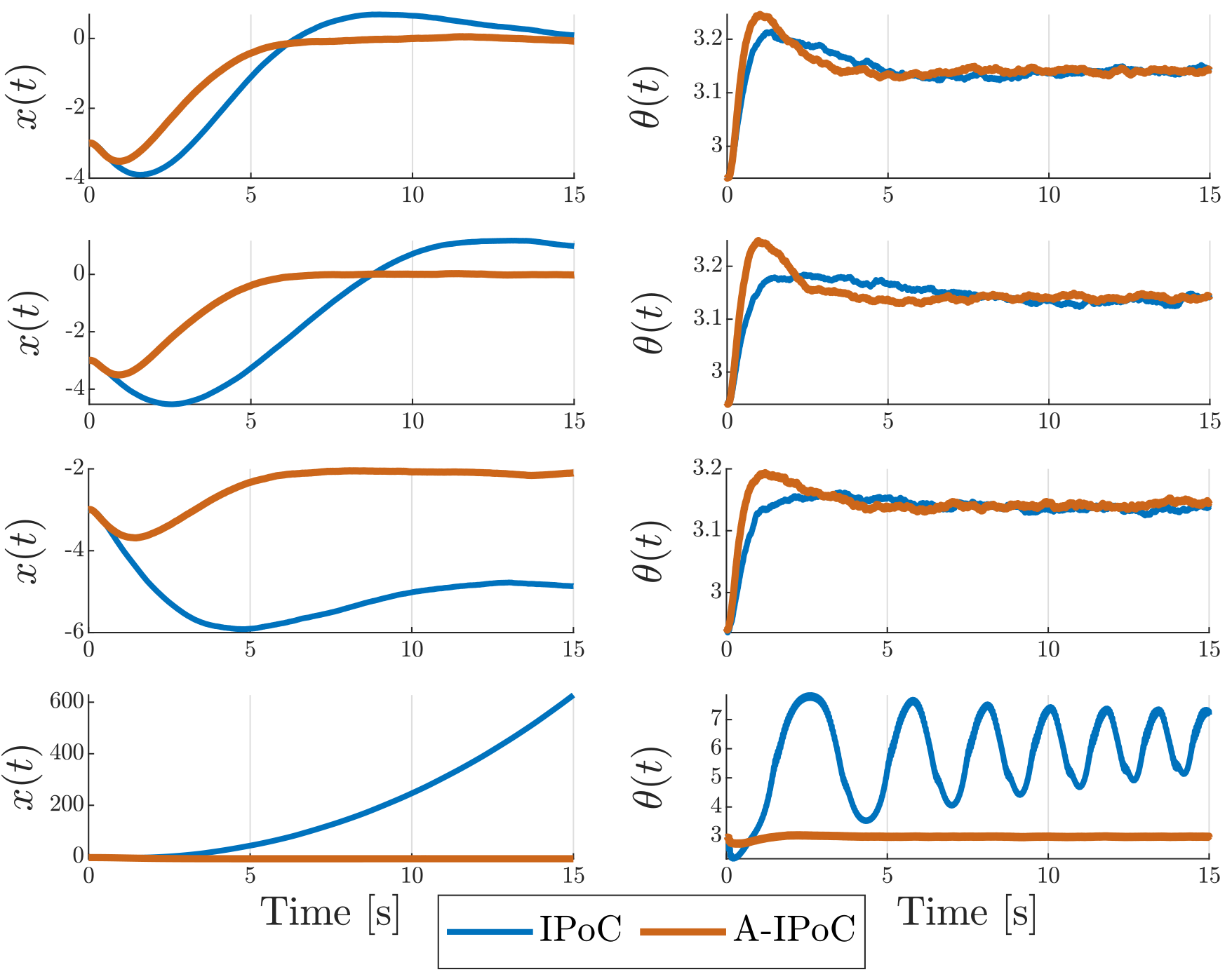}
\caption{Configuration errors: IPoC vs. A-IPoC. Update ratios (top to bottom): \( \rho = \{1.0, 0.5, 0.2, 0.01\} \).}
\label{fig:conv_position}
\end{center}
\end{figure} 
% 
% \\
For continuous updates (\(\rho = 1\)), no significant difference is observed between the models. However, as \(\rho\) decreases, the IPoC exhibits a more sluggish response in the position variables, while the A-IPoC remains slightly more responsive, reaching the reference position more quickly. 
\newpage
These observations give rise to two key insights:
\begin{itemize}
    \item \textbf{Initial undershoot dynamics}: The non-minimum phase behavior causes the cart to briefly move in the opposite direction before eventually converging to the reference value, rather than directly reaching it. This counterintuitive motion complicates control, as it requires additional effort to counteract the initial deviation.
    \item \textbf{Different error bounds}: While the cart’s position error can grow unbounded, the pendulum’s error is inherently constrained by the cyclic nature of angles, limiting its divergence during filter instability. This is evident in the right-hand-side subfigures, where the LQG effectively stabilizes all angular variables but struggles with the cart's position, even at the same $\rho$ values.
    \item \textbf{Higher stability margin}: At the lowest \(\rho\) value, the conventional IPoC-based controller exhibits divergence in both configuration variables, whereas the A-IPoC-based LQG controller remains stable.
\end{itemize}
% \\
% 
% 
\begin{table}[b]
\centering
\caption{Performance metrics: IPoC vs. A-IPoC (ours).}
\renewcommand{\arraystretch}{1.35}
\begin{tabular}{|c|c|c|c|c|c|c|} 
% ------------------------------------------- %
\multirow{2}{*}{$\rho$} & \multicolumn{2}{c|}{IAE} & \multicolumn{2}{c|}{ITAE} & \multicolumn{2}{c|}{$|\,  e_{ss} \,  |$} \\ 
% ------------------------------------------- %
 & IPoC & A-IPoC & IPoC & A-IPoC & IPoC & A-IPoC \\ \hline 
% ------------------------------------------- %
\multirow{2}{*}{1.0} & 23.35 & \textbf{11.64} & 77.13 & \textbf{26.32} & 0.06 & \textbf{0.03} \\ 
& 0.31\cellcolor{Gray} & \textbf{0.24}\cellcolor{Gray} & 1.13\cellcolor{Gray} & \textbf{0.24}\cellcolor{Gray} & \textbf{0.002}\cellcolor{Gray} & 0.004\cellcolor{Gray} \\ \hline 
% ------------------------------------------- %
\multirow{2}{*}{0.5} & 110.8 & \textbf{50.58} & 881.7 & \textbf{370.2} & 7.25 & \textbf{3.28}\\ 
& 0.26\cellcolor{Gray} & \textbf{0.16}\cellcolor{Gray} & 0.84\cellcolor{Gray} & \textbf{0.43}\cellcolor{Gray} & 0.006\cellcolor{Gray} & 0.006\cellcolor{Gray} \\ \hline 
% ------------------------------------------- %
\multirow{2}{*}{0.2} & 160.4 & \textbf{69.57}& - & \textbf{531.7} & 11.86 & \textbf{4.71}\\ 
& 0.36\cellcolor{Gray} & \textbf{0.18}\cellcolor{Gray} & 1.27\cellcolor{Gray} & \textbf{0.50}\cellcolor{Gray} & \textbf{0.003}\cellcolor{Gray}& 0.004\cellcolor{Gray} \\ \hline 
% ------------------------------------------- %
\multirow{2}{*}{0.1} & - & \textbf{129.7}& - & \textbf{847.4}& - & \textbf{9.49}\\ 
& -\cellcolor{Gray} & \textbf{0.46}\cellcolor{Gray} & -\cellcolor{Gray} & \textbf{0.96}\cellcolor{Gray} & -\cellcolor{Gray}& \textbf{0.008}\cellcolor{Gray} \\ \hline 
% ------------------------------------------- %
\end{tabular} \label{t:err_2}
\end{table} 
% \\
To enhance our temporal analysis, both variables are normalized by their initial conditions \(\dot{x}_0, \dot{\theta}_0\), and then against the normalized time axis, as described in \eqref{eq:t_go}. This projection improves interpretability and ensures that all trajectories start from the light blue sphere at (1,1,1) and, if stabilization is achieved, converge to the red sphere at the origin (0,0,0); otherwise, they diverge.
\\
Fig.~\ref{fig:conv_angular} extends the previous analysis across similar sets of \(\rho\) values, providing a clearer comparison by highlighting both differences in transient behavior and variations in steady-state errors. While the observed patterns are consistent with earlier findings, the normalization effect more clearly emphasizes how error magnitudes increase as stability deteriorates. 
\\
Table~\ref{t:err_2} compares all simulation results, where the IAE and ITAE metrics assess cumulative error propagation, \(e_{ss}\) quantifies the residual error over the time span \(T\), and the better-performing model is highlighted in bold.  
\\
While both models exhibit comparable steady-state errors, the A-IPoC achieves significantly lower transient errors, reducing them by 30\%–80\% compared to the IPoC. As illustrated in both figures above, the A-IPoC maintains marginal stability even under extreme unaided conditions (\(\rho=0.01\)), where the IPoC completely diverges.  
% \\
These results confirm the A-IPoC's improved responsiveness, error correction, and tracking performance, highlighting its superior stability margin.
\begin{figure}[t]
\begin{center}
\includegraphics[width=0.495\textwidth]{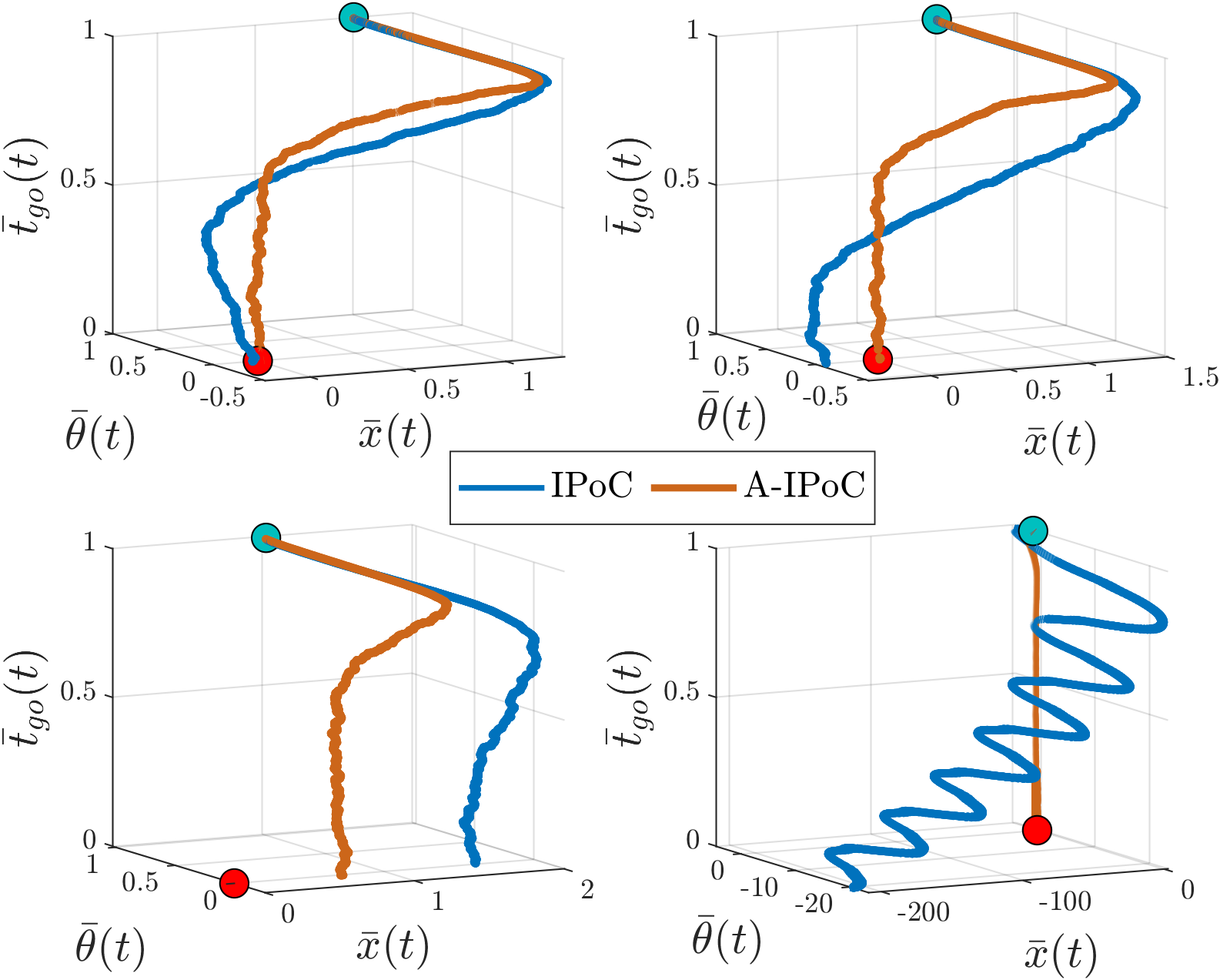}
\caption{Normalized trajectories: IPoC vs. A-IPoC. Update ratios (left to right, top to bottom): \( \rho = \{1.0, 0.5, 0.2, 0.01\} \).}
\label{fig:conv_angular}
\end{center}
\end{figure} 

\subsection{Stability comparison}
While analytical stability guarantees for continuous loop closure (\(\rho = 1\)) are well-documented in the literature, our work challenges this assumption by introducing update discontinuities (\(\rho < 1\)) and control saturation, which complicate closed-form analysis. Instead, we use Algorithm~\hyperref[sec:Monte]{1} to heuristically explore the stable regions where the LQG controller can successfully re-stabilize the system across a wide range of initial perturbations, \(\boldsymbol{\Gamma}_0 = \dot{\boldsymbol{x}}_0 \times \dot{\boldsymbol{\theta}}_0\) .
\begin{figure}[t]
\begin{center}
\includegraphics[width=0.5\textwidth]{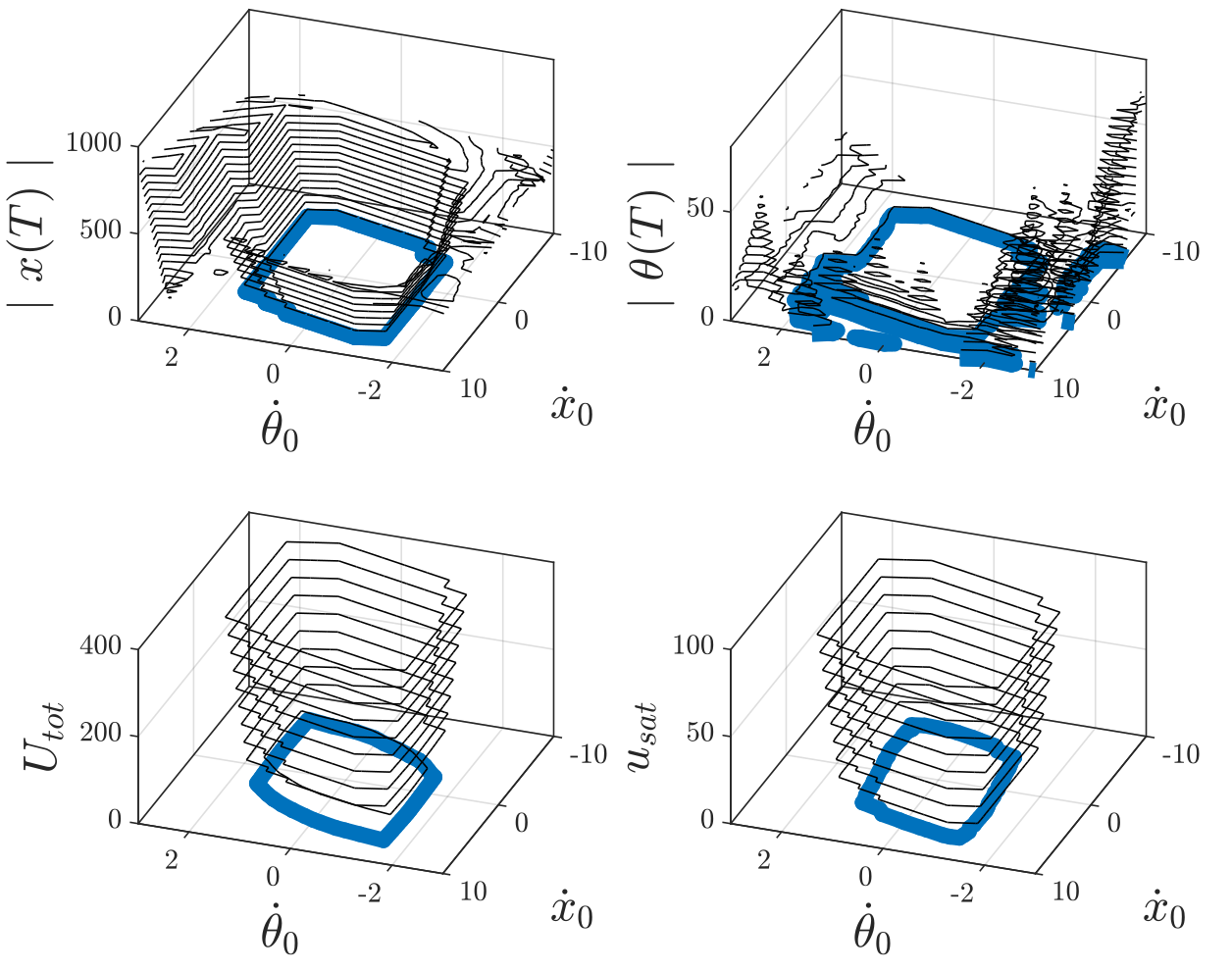}
\caption{Response surfaces of the LQG-based IPoC system.}
\label{fig:stab_1}
\end{center}
\end{figure}
\\
Fig.~\ref{fig:stab_1} presents four response surfaces in a clockwise sequence: i) final cart position (\(x(T)\)), ii) final pendulum angle (\(\theta(T)\)), iii) total control effort (\(\text{U}_{tot}\) based on \eqref{eq:u_tot}), and iv) actuator saturation percentage over time (\(u_{sat}\)). 
\\
Before comparing the models, each subfigure provides valuable insights into stability characteristics from different perspectives:
\begin{enumerate}
    \item \textbf{Basin of attraction}: The iterative exploration identifies plateau-like regions (highlighted by the thick blue line at the bottom of the surface), which define the boundedness of both state and control outputs. A larger polygon area for the \( j \) metric, denoted as \( \mathcal{S}_j \), indicates greater resistance to divergence.
    \item \textbf{Instability thresholds}: As the system surpasses these boundaries, the output level sets exhibit rapid growth, initially manifesting as amplified fluctuations before escalating into divergence. This visualization not only delineates the critical stability thresholds but also identifies directional components that characterize the system's sensitivity to perturbations within the broader state space.
    \item \textbf{Energy efficiency}: While the top row focuses on stability, the bottom rows show the associated energy expenditure. Although the stability regions (\(\mathcal{S}\)) retain a similar shape, both control costs and saturation times increase significantly as instability progresses. This is reflected in the widening of the topographic level sets, indicating the growing energy needed to counteract rising errors and restore stability.
\end{enumerate}
\begin{figure}[t]
\begin{center}
\includegraphics[width=0.495\textwidth]{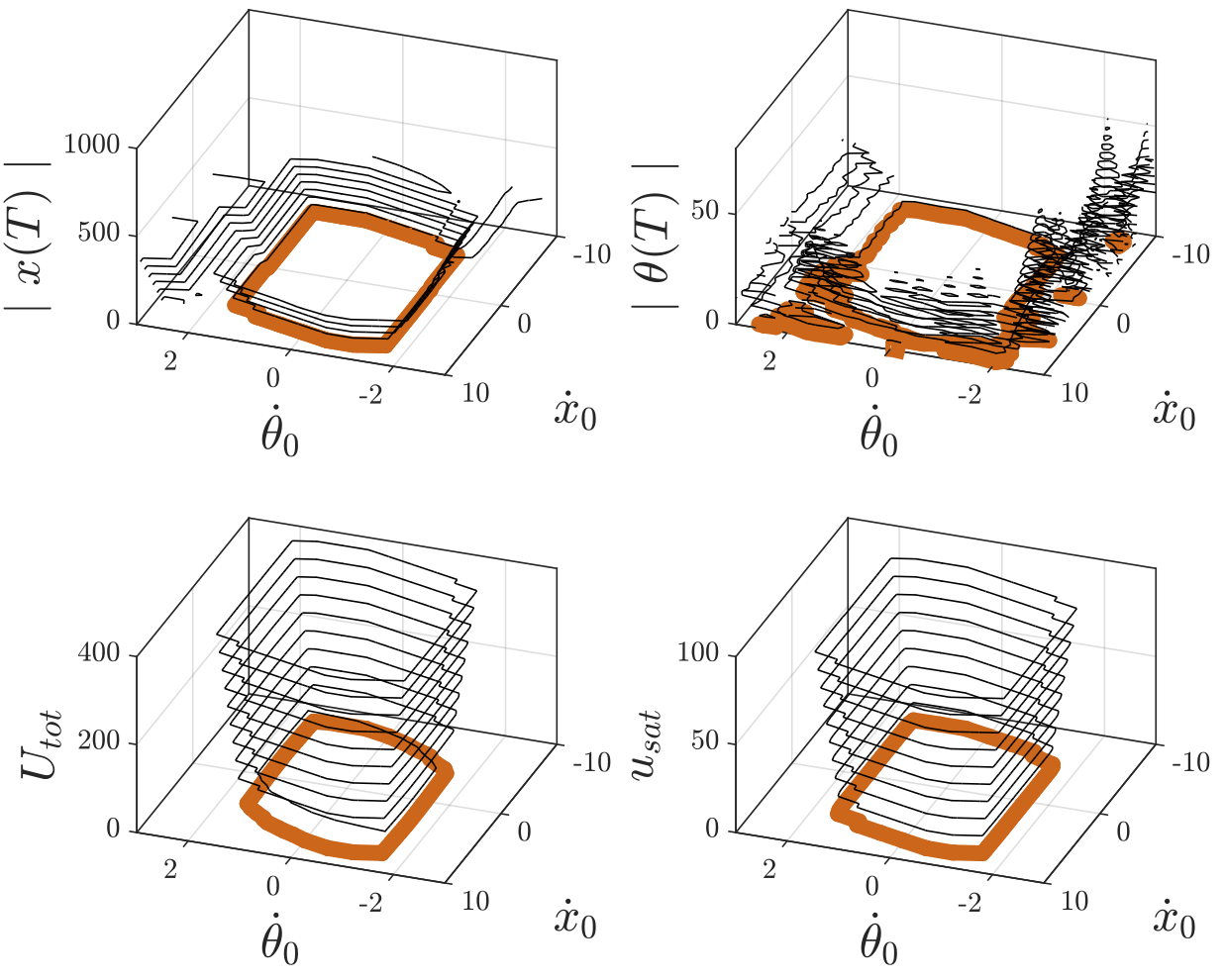}
\caption{Response surfaces of the LQG-based A-IPoC system.}
\label{fig:stab_2}
\end{center}
\end{figure} 
When compared directly with Fig.~\ref{fig:stab_2}, both models show similar patterns, with only subtle differences. However, a closer inspection reveals two key aspects that highlight significant contrasts.
\\
First, the stability region shows a notable difference, with the A-IPoC implementation consistently exhibiting larger basin (stability) areas compared to its IPoC counterpart (\(\mathcal{S}_{\text{A-IPoC}} > \mathcal{S}_{\text{IPoC}}\)). Second, the extent to which the level sets are expanded upward differs between the two models, with the IPoC divergence level being higher. 
\\
This suggests that the A-IPoC model offers greater marginal stability, as its divergence remains more moderate for the same initial perturbation values.
\\
Table~\ref{t:err_3} presents these results numerically, using normalization techniques to eliminate physical units. For each of the four criteria, denoted by \( j \), the corresponding basin area is represented by \( \mathcal{S}_j \). To aid in assessment, the first column shows the fraction \( \mathcal{S}_j \) relative to the basin area obtained from the IPoC model. The second column normalizes the \( j \)-th basin area by the total permissible perturbation range, \( \boldsymbol{\Gamma}_0 \), highlighting the relative robustness of each model. The rightmost column quantifies the total number of instances where stability is lost for both configuration variables, complementing the analysis in the previous columns. 
\begin{table}[h]
\centering
\caption{Key performance metrics: IPoC vs. A-IPoC (ours).}
\renewcommand{\arraystretch}{1.35}
\begin{tabular}{|c|c|c|c|c|c|c|} 
% ------------------------------------------- %
\multirow{2}{*}{Criteria} & \multicolumn{2}{c|}{$\mathcal{S}_j/\mathcal{S}_{\text{IPoC},j}$} & \multicolumn{2}{c|}{$\mathcal{S}_j \, / \, \boldsymbol{\Gamma}_0$} & \multicolumn{2}{c|}{Crash rate [\%]} \\ 
% ------------------------------------------- %
 & IPoC & A-IPoC & IPoC & A-IPoC & IPoC & A-IPoC \\ \hline 
% ------------------------------------------- %
$| \, x(T) \, |$ & 1 & \textbf{1.27} & 0.469 & \textbf{0.595} & 53.07 & \textbf{40.50} \\ \hline 
% ------------------------------------------- %
$| \, \theta(T) \, |$ & 1 & \textbf{1.28} & 0.539 & \textbf{0.688} & 46.15 & \textbf{31.24}\\ \hline 
% ------------------------------------------- %
$u_{\text{sat.}}$ & 1 & \textbf{1.31} & 0.402 & \textbf{0.525} & 59.80 & \textbf{47.50} \\ \hline 
% ------------------------------------------- % 
$\text{U}_{tot}$ & 1 & \textbf{1.39} & 0.354 & \textbf{0.491} & 64.62 & \textbf{50.91}\\ \hline 
\end{tabular} \label{t:err_3}
\end{table} 
\\
Clearly, the A-IPoC outperforms the IPoC, exhibiting stability areas 27\%-39\% larger and crash rates 10\%-15\% lower. As a result, the A-IPoC sustains a broader stable region, providing superior immunity to disturbances and control limits under identical conditions.

\section{Conclusion} \label{sec:conc}
This study presents a significant enhancement to the LQG control framework for pendulum-like systems, addressing a long-standing challenge in integrating accelerometer measurements directly into predictive motion models within the classical Newton-Euler formulation. By leveraging differential flatness theory, we introduce an approach that augments the system’s dynamics with higher-order terms (A-IPoC), enabling more accurate acceleration predictions and smoother controller actuation.  
\\
Our method improves state extrapolation during sensor-limited periods, strengthening observer-controller coupling and leading to 27\%-39\% larger stability regions and 10\%-15\% lower crash rates. Across a range of dynamic and high-uncertainty verification scenarios, the proposed framework demonstrated superior robustness, exhibiting greater immunity to instability and enhanced disturbance rejection. These results bridge a crucial gap in existing research and establish a foundation for more reliable and efficient control of dynamically stable systems, particularly those with non-minimum phase dynamics, which are often hindered by inherent delays and inverted control behavior.

% -------------------------------------------------- %

\section*{Acknowledgment}
D.E. is supported by the Maurice Hatter foundation and by the Bloom School Institutional Excellence Scholarship for outstanding doctoral students at the University of Haifa.

\bibliographystyle{IEEEtran}

\bibliography{Ref}

% Generated by IEEEtran.bst, version: 1.14 (2015/08/26)
\begin{thebibliography}{10}
\providecommand{\url}[1]{#1}
\csname url@samestyle\endcsname
\providecommand{\newblock}{\relax}
\providecommand{\bibinfo}[2]{#2}
\providecommand{\BIBentrySTDinterwordspacing}{\spaceskip=0pt\relax}
\providecommand{\BIBentryALTinterwordstretchfactor}{4}
\providecommand{\BIBentryALTinterwordspacing}{\spaceskip=\fontdimen2\font plus
\BIBentryALTinterwordstretchfactor\fontdimen3\font minus \fontdimen4\font\relax}
\providecommand{\BIBforeignlanguage}[2]{{%
\expandafter\ifx\csname l@#1\endcsname\relax
\typeout{** WARNING: IEEEtran.bst: No hyphenation pattern has been}%
\typeout{** loaded for the language `#1'. Using the pattern for}%
\typeout{** the default language instead.}%
\else
\language=\csname l@#1\endcsname
\fi
#2}}
\providecommand{\BIBdecl}{\relax}
\BIBdecl

\bibitem{kalman1960contributions}
R.~E. Kalman \emph{et~al.}, ``Contributions to the theory of optimal control,'' \emph{Bol. soc. mat. mexicana}, vol.~5, no.~2, pp. 102--119, 1960.

\bibitem{bertsekas2012dynamic}
D.~Bertsekas, \emph{Dynamic programming and optimal control: Volume I}.\hskip 1em plus 0.5em minus 0.4em\relax Athena scientific, 2012, vol.~4.

\bibitem{prasad2014optimal}
L.~B. Prasad, B.~Tyagi, and H.~O. Gupta, ``Optimal control of nonlinear inverted pendulum system using {PID} controller and {LQR}: performance analysis without and with disturbance input,'' \emph{International Journal of Automation and Computing}, vol.~11, pp. 661--670, 2014.

\bibitem{nasir2008performance}
A.~N. Nasir, M.~A. Ahmad, and M.~F. Rahmat, ``Performance comparison between {LQR} and {PID} controllers for an inverted pendulum system,'' in \emph{AIP Conference Proceedings}, vol. 1052, no.~1.\hskip 1em plus 0.5em minus 0.4em\relax American Institute of Physics, 2008, pp. 124--128.

\bibitem{huang2015nonlinear}
J.~Huang, S.~Ri, L.~Liu, Y.~Wang, J.~Kim, and G.~Pak, ``Nonlinear disturbance observer-based dynamic surface control of mobile wheeled inverted pendulum,'' \emph{IEEE Transactions on Control Systems Technology}, vol.~23, no.~6, pp. 2400--2407, 2015.

\bibitem{anderson2002learning}
C.~W. Anderson, ``Learning to control an inverted pendulum using neural networks,'' \emph{IEEE Control Systems Magazine}, vol.~9, no.~3, pp. 31--37, 2002.

\bibitem{engelsman2023information}
D.~Engelsman and I.~Klein, ``Information-aided inertial navigation: A review,'' \emph{IEEE Transactions on Instrumentation and Measurement}, vol.~72, pp. 1--18, 2023.

\bibitem{ozalp2020review}
R.~{\"O}zalp, N.~K. Varol, B.~Ta{\c{s}}ci, and A.~U{\c{c}}ar, ``A review of deep reinforcement learning algorithms and comparative results on inverted pendulum system,'' \emph{Machine Learning Paradigms: Advances in Deep Learning-based Technological Applications}, pp. 237--256, 2020.

\bibitem{chacko2023lqr}
S.~J. Chacko and R.~J. Abraham, ``On {LQR} controller design for an inverted pendulum stabilization,'' \emph{International Journal of Dynamics and Control}, vol.~11, no.~4, pp. 1584--1592, 2023.

\bibitem{brunton2022data}
S.~L. Brunton and J.~N. Kutz, \emph{Data-driven science and engineering: Machine learning, dynamical systems, and control}.\hskip 1em plus 0.5em minus 0.4em\relax Cambridge University Press, 2022.

\bibitem{park2008error}
M.~Park and Y.~Gao, ``Error and performance analysis of {MEMS}-based inertial sensors with a low-cost {GPS} receiver,'' \emph{Sensors}, vol.~8, no.~4, pp. 2240--2261, 2008.

\bibitem{alam2014adaptive}
M.~Alam and J.~Rohac, ``Adaptive data filtering of inertial sensors with variable bandwidth,'' \emph{Sensors}, vol.~15, no.~2, pp. 3282--3298, 2014.

\bibitem{li2011neural}
Z.~Li and C.~Yang, ``Neural-adaptive output feedback control of a class of transportation vehicles based on wheeled inverted pendulum models,'' \emph{IEEE Transactions on Control Systems Technology}, vol.~20, no.~6, pp. 1583--1591, 2011.

\bibitem{engelsman2023data}
D.~Engelsman and I.~Klein, ``Data-driven denoising of stationary accelerometer signals,'' \emph{Measurement}, vol. 218, p. 113218, 2023.

\bibitem{cohen2024inertial}
N.~Cohen and I.~Klein, ``Inertial navigation meets deep learning: A survey of current trends and future directions,'' \emph{Results in Engineering}, p. 103565, 2024.

\bibitem{waegli2010noise}
A.~Waegli, J.~Skaloud, S.~Guerrier, M.~E. Par{\'e}s, and I.~Colomina, ``Noise reduction and estimation in multiple micro-electro-mechanical inertial systems,'' \emph{Measurement Science and Technology}, vol.~21, no.~6, p. 065201, 2010.

\bibitem{libero2024augmented}
Y.~Libero and I.~Klein, ``Augmented virtual filter for multiple {IMU} navigation,'' \emph{IEEE Transactions on Instrumentation and Measurement}, 2024.

\bibitem{engelsman2024parametric}
D.~Engelsman, Y.~Stolero, and I.~Klein, ``Parametric and state estimation of stationary inertial sensors,'' \emph{IEEE Access}, 2024.

\bibitem{chen2024slip}
L.~Chen, R.~Cui, W.~Yan, F.~Ma, H.~Xu, H.~Yu, and H.~Li, ``Slip-regulated optimal control for hybrid-driven underwater hexapod robot with thrusters and c-shaped legs,'' \emph{IEEE Transactions on Control Systems Technology}, 2024.

\bibitem{cohen2025adaptive}
N.~Cohen and I.~Klein, ``Adaptive kalman-informed transformer,'' \emph{Engineering Applications of Artificial Intelligence}, vol. 146, p. 110221, 2025.

\bibitem{arrowsmith1990introduction}
D.~K. Arrowsmith and C.~M. Place, \emph{An introduction to dynamical systems}.\hskip 1em plus 0.5em minus 0.4em\relax Cambridge university press, 1990.

\bibitem{aastrom2012introduction}
K.~J. {\AA}str{\"o}m, \emph{Introduction to stochastic control theory}.\hskip 1em plus 0.5em minus 0.4em\relax Courier Corporation, 2012.

\bibitem{chrif2014aircraft}
L.~Chrif and Z.~M. Kadda, ``Aircraft control system using {LQG} and {LQR} controller with optimal estimation-kalman filter design,'' \emph{Procedia Engineering}, vol.~80, pp. 245--257, 2014.

\bibitem{khalil1996robust}
I.~Khalil, J.~Doyle, and K.~Glover, \emph{Robust and optimal control}.\hskip 1em plus 0.5em minus 0.4em\relax Prentice hall, 1996, vol.~2.

\bibitem{mahmudov2000controllability}
N.~Mahmudov and A.~Denker, ``On controllability of linear stochastic systems,'' \emph{International Journal of Control}, vol.~73, no.~2, pp. 144--151, 2000.

\bibitem{wang2010design}
H.~Wang, H.~Dong, L.~He, Y.~Shi, and Y.~Zhang, ``Design and simulation of {LQR} controller with the linear inverted pendulum,'' in \emph{2010 international conference on electrical and control engineering}.\hskip 1em plus 0.5em minus 0.4em\relax IEEE, 2010, pp. 699--702.

\bibitem{eide2011lqg}
R.~Eide, P.~M. Egelid, H.~R. Karimi \emph{et~al.}, ``{LQG} control design for balancing an inverted pendulum mobile robot,'' \emph{Intelligent Control and Automation}, vol.~2, no.~02, p. 160, 2011.

\bibitem{razmjooy2014comparison}
N.~Razmjooy, A.~Madadi, H.-R. Alikhani, and M.~Mohseni, ``Comparison of {LQR} and pole placement design controllers for controlling the inverted pendulum,'' \emph{Journal of World’s Electrical Engineering and Technology}, vol. 2322, p. 5114, 2014.

\bibitem{brunton2016koopman}
S.~L. Brunton, B.~W. Brunton, J.~L. Proctor, and J.~N. Kutz, ``Koopman invariant subspaces and finite linear representations of nonlinear dynamical systems for control,'' \emph{PloS one}, vol.~11, no.~2, p. e0150171, 2016.

\bibitem{fliess1995flatness}
M.~Fliess, J.~L{\'e}vine, P.~Martin, and P.~Rouchon, ``Flatness and defect of non-linear systems: introductory theory and examples,'' \emph{International Journal of Control}, vol.~61, no.~6, pp. 1327--1361, 1995.

\bibitem{murray1995differential}
R.~M. Murray, M.~Rathinam, and W.~Sluis, ``Differential flatness of mechanical control systems: A catalog of prototype systems,'' in \emph{ASME International Mechanical Engineering Congress and Exposition}.\hskip 1em plus 0.5em minus 0.4em\relax Citeseer, 1995, pp. 349--357.

\bibitem{van1998differential}
M.~van Nieuwstadt, M.~Rathinam, and R.~M. Murray, ``Differential flatness and absolute equivalence of nonlinear control systems,'' \emph{SIAM Journal on Control and Optimization}, vol.~36, no.~4, pp. 1225--1239, 1998.

\bibitem{rigatos2015nonlinear}
G.~G. Rigatos, \emph{Nonlinear control and filtering using differential flatness approaches: applications to electromechanical systems}.\hskip 1em plus 0.5em minus 0.4em\relax Springer, 2015, vol.~25.

\bibitem{bittanti1991simple}
S.~Bittanti, D.~Hernandez, and G.~Zerbi, ``The simple pendulum and the periodic {LQC} control problem,'' \emph{Journal of the Franklin Institute}, vol. 328, no. 2-3, pp. 299--315, 1991.

\bibitem{arelhi1997lqg}
R.~Arelhi, M.~Johnson, and J.~Wilkie, ``{LQG} stable stabilizing control: Some recent results,'' in \emph{Proceedings of the 36th IEEE Conference on Decision and Control}, vol.~2.\hskip 1em plus 0.5em minus 0.4em\relax IEEE, 1997, pp. 1431--1436.

\bibitem{barya2010comparison}
K.~Barya, S.~Tiwari, and R.~Jha, ``Comparison of {LQR} and robust controllers for stabilizing inverted pendulum system,'' in \emph{2010 International Conference on Communication Control and Computing Technologies}.\hskip 1em plus 0.5em minus 0.4em\relax IEEE, 2010, pp. 300--304.

\end{thebibliography}

\end{document}